
\documentclass[11pt]{article}
\usepackage{geometry}[1in]
\pdfoutput=1


\usepackage[fleqn]{amsmath}
\usepackage{amssymb, amsthm} 
\newtheorem{theorem}{Theorem}[section]

\newtheorem{lemma}[theorem]{Lemma}

\newtheorem{proposition}[theorem]{Proposition}

\newtheorem{defn}[theorem]{Definition}
\newtheorem{definition}[theorem]{Definition}


\usepackage{fullpage, graphics}

\usepackage{hyperref} 

\hypersetup{colorlinks,
 citecolor=MidnightBlue,
  linkcolor=black,
  urlcolor=black}
\usepackage{mathpazo} 
\linespread{1.05}        
\usepackage{amsfonts}
\usepackage{amssymb,amsmath}
\usepackage{latexsym}
\usepackage{graphicx}
\usepackage{color}
\usepackage[dvipsnames]{xcolor}
\usepackage{enumerate}



\newcommand{\ea}{{\em et al.}\ }


\def\E{\mathop{\mathbb{E}}\limits}
\newcommand{\Ebb}{\mathbb{E}}
\def\Pr{\mathop{\mathbf{Pr}}\limits}

\newcommand{\deq}{\stackrel{\scriptscriptstyle\triangle}{=}}

\renewcommand{\epsilon}{\varepsilon}

\newcommand{\dist}{\mathsf{dist}}

\newcommand{\R}{{\mathbb R}}
\newcommand{\Z}{{\mathbb Z}}

\newcommand{\C}{\mathcal{C}}

\renewcommand{\a}{\alpha}

\renewcommand{\d}{\delta}
\newcommand{\g}{\gamma}

\newcommand{\bits}{\{0,1\}^d }

\newcommand{\A}{\mathcal{A}}
\newcommand{\Ak}{\mathcal{A}_k(\delta)}

\newcommand{\set}[1]{\ensuremath{\{#1\}}}

\newcommand{\Dn}{\mathcal{D}_{n,d}}

\newcommand{\Bk}{\Ball( 0^d, k)}

\newcommand{\edges}{\mathsf{edges}}

\newcommand{\er}{\mathsf{e}_{\leq r}}
\newcommand{\Er}{\mathsf{E}_{\leq r}}
\newcommand{\eR}{\mathsf{e}_{\leq R}}
\newcommand{\ER}{\mathsf{E}_{\leq R}}

\newcommand{\edr}{\mathsf{e}_{r}}
\newcommand{\Edr}{\mathsf{E}_{r}}

\newcommand{\G}{\Gamma_{\leq r}}

\newcommand{\Gdr}{\Gamma_{r}}

\newcommand{\Ball}{\mathsf{Ball}}

\newcommand{\ceil}[1]{\lceil #1 \rceil}
\newcommand{\floor}[1]{\lfloor #1 \rfloor}

\newcommand{\lk}{\ell_k}

\newcommand{\fixb}{
  Fix $b\in \{0,1,\ldots,\floor{r/2}$\}, 
  and assume that $r$ divides $R$ with $r < R$
 }
\newcommand{\denom}{4^{R/r} \cdot R! \cdot d^{bR/r}}

 \newcommand{\ES}{\E_{S\sim \Dn}}

\renewcommand{\o}{\mathsf{overhead}}

\newcommand{\w}{\mathsf{w}}
 
\newcommand{\ohbound}{\ensuremath{f_{\le r}(n,d,p,\delta)}}
\newcommand{\eqohbound}{\ensuremath{f_r(n,d,p,\delta)}}
\newcommand{\numberthis}{\addtocounter{equation}{1}\tag{\theequation}}
\newcommand{\ignore}[1]{\relax}

\usepackage{fancyhdr}
\pagestyle{fancy}
\thispagestyle{fancy}

\lhead{}
\rhead{}

\title{Massively-Parallel Similarity Join, Edge-Isoperimetry, and Distance
Correlations on the Hypercube
} 

\author{
	Paul Beame\thanks{Research supported in part by NSF grants CCF-1217099 and CCF-1524246}
	\\ {\small Dept. of Computer Science \& Engineering}
	\\ {\small University of Washington, Seattle}
	\\ {\small beame@cs.washington.edu}
	\and 
	Cyrus Rashtchian$^*$
	\\ {\small Dept. of Computer Science \& Engineering}
	\\ {\small University of Washington, Seattle}
	\\ {\small cyrash@cs.washington.edu}
}

\begin{document}

\date{\today}
\maketitle

\begin{abstract}
We study distributed protocols for finding all pairs of similar vectors in a large dataset.  Our results pertain to a variety of discrete metrics, and we give concrete instantiations for Hamming distance.  
In particular, we give improved upper bounds on the overhead required for similarity defined by Hamming distance $r>1$ and prove a lower bound showing qualitative optimality of the overhead required for similarity over any Hamming distance $r$. 
Our main conceptual contribution is a connection between similarity search algorithms and certain graph-theoretic quantities.  
For our upper bounds, we exhibit a general method for designing one-round protocols using edge-isoperimetric shapes in similarity graphs.   For our lower bounds, we define a new combinatorial optimization problem, which can be stated in purely graph-theoretic terms yet also captures the core of the analysis in previous theoretical work on distributed similarity joins.  As one of our main technical results, we prove new bounds on distance correlations in subsets of the
Hamming cube. 
\end{abstract}

\section{Introduction}
The task of reporting all pairs of similar items arises as a core computation in many applications.  Known as a similarity join in the database community, this task generalizes a relational join~\cite{afrati-fuzzy, jiang, 
vernica}.   Similarity joins aid in the removal of near-duplicate data~\cite{broder,  xiao, yalniz} and the search for images with similar content~\cite{iqt,li, wang-learning}.  In social media and other web applications, collaborative filtering methods employ similarity joins to find related users and items~\cite{googlenews, wtf, mmds, zhang-collab}.  

We give new parallel algorithms and prove new communication lower bounds for computing a similarity join in a distributed, shared-nothing cluster.  The similarity join problem is defined relative to a metric space $(V,\rho)$.  Given an input set $S$ of points in $V$ and a distance threshold~$r$, the goal is to output all pairs $x,y$ in $S$ satisfying $\rho(x,y) \leq r$.  For the parallel version of this problem, the input $S$ starts equally partitioned among $p$ processors.  
The challenge in designing an efficient parallel algorithm comes from outputting every close pair while balancing the workload among the processors and completing the overall similarity join quickly. 

Our algorithms and lower bounds are for a {\em simultaneous} model, where the processors have one round to communicate input points before locally computing and outputting their portion of the set of similar pairs.  This model is consistent with previous work~\cite{sharir, bks, afrati-dist, disco}, and with the fact that  each additional round of communication greatly increases the runtime in practice. 
We adopt the natural assumption that the local computation time
increases proportionally to the maximum number of received vertices, and we focus
on minimizing the latter quantity. 

If there are $p$ processors and $n$ input points, an ideally balanced load 
with no replication of data would send precisely $n/p$ points to each processor.
However, since every similar {\em pair} must end up at some processor, 
some replication is required, and it will not be possible to achieve this ideal. We measure the {\em overhead} of an algorithm as the ratio of the
maximum number of points any processor receives to the $n/p$ ideal\footnote{Overhead resembles  the replication rate used in analyzing
MapReduce algorithms, but it does not depend on the notion of reducer size~\cite{afrati-dist}, and thus it applies more generally than to just MapReduce algorithms.}.
To avoid large overhead in the worst case, any adequate algorithm must employ a randomized load balancing strategy. 

We exhibit new algorithms that improve on the overhead of the
best previous similarity join algorithms from~\cite{afrati-fuzzy, afrati-anchor, afrati-dist}  for Hamming distance
on $\bits$ with distance thresholds $r>1$.  Our improvements are both quantitative in terms of the overhead bound and more general in terms of the range of set sizes $n$ for which they work. More fundamentally, we show that every randomized similarity join algorithm for
Hamming distance $r$ on $\bits$, even one that only weakly approximates the
similarity join, requires overhead $d^{\Omega(r)}$.  This qualitatively matches
the overhead of the best algorithms. Moreover, 
ours is the first lower bound for similarity join that applies for any
$r\ne 1$, for any set size that is $o(2^d)$, or for any algorithm that is only
approximately correct.

We develop our improved similarity join algorithm as well as our lower bound by
analyzing certain graph-theoretic quantities of a similarity graph. For
distance threshold $r$ on $\bits$, the Hamming similarity  graph has edges between all pairs of
points with Hamming distance at most $r$.  To approximate the
similarity join, both endpoints of most edges of this graph must arrive together at some processor.  This criterion has a natural interpretation in graph terms.  Let $A_1,\ldots, A_p$ be the sets of points that are sent to each of the $p$ processors.  Then, the collection of~$p$ subgraphs induced by these sets must contain many of the edges in the input set.  Since the input consists of an arbitrary set of points, it follows
that randomized algorithms correspond to distributions $\A$ on approximate coverings $(A_1,\ldots, A_p)$ of the edges of
the similarity graph.  We refer to such a $p$-tuple as an {\em edge-covering}.

The design of edge-coverings connects the similarity join problem with the edge-isoperimetric question in graph theory.  To see this connection, notice two facts.  First, our overhead measure is minimized by reducing the number of input points contained by each set $A_i$.  Second, an edge-covering is improved if each set $A_i$ contains many edges relative to its size.  {\em Edge-isoperimetric sets} in a graph are sets of vertices of a given size that
maximize the number of induced edges~\cite{bezrukov}.  Therefore, we observe that it is good to build each set $A_i$
as a union of randomly chosen edge-isoperimetric sets of suitable size
in the similarity graph. In particular, we choose that size to be roughly $n/p$, the
ideal number of input elements sent to a processor. This ensures bounded overhead and avoids imbalanced correlations of the input
set with the edge-isoperimetric sets.  In fact, this class of algorithms is
a generic one that is suitable for use with any edge-transitive similarity
graph, such as the graphs defined by the Euclidean or Manhattan distance.
 
In the case of Hamming distance $r=1$ on $\bits$, the optimal
edge-isoperimetric sets are near-subcubes~\cite{bernstein, harper, hart, lindsey}.  It is 
worth noting that random subcubes are a feature of the best similarity join
algorithm for this case~\cite{afrati-fuzzy, afrati-dist}.  However, for $r>1$, previous algorithms
use other methods~\cite{afrati-fuzzy, afrati-anchor, HIM, coverlsh1,  afrati-dist, xiao}, and good edge-isoperimetric sets were not
identified as suitable constructs.  Though the optimal edge-isoperimetric
sets for $r>1$ are not known, in our improved similarity join algorithm we use
Hamming balls (which interestingly are vertex-isoperimetric sets for $r=1$)
and show that they are good approximations to optimal edge-isoperimetric sets.

Our general lower bound applies much more broadly than this edge-isoperimetric set
construction.  It holds for any randomized way of choosing sets
$(A_1,\ldots,A_p)$.  We show that edge-covering and similarity join are hard
even if the input set of points $S$ is itself a random, sub-sampled
Hamming ball.

The general intuition for our lower bound proof is that either the total number of elements
among the $A_i$ is large, in which case large overhead follows immediately, or
the sets $A_i$ have a large density of pairs of points in $\bits$ of distance
$\le r$.  For this second case we derive our lower bound on the overhead 
by proving a new connection between the density of different
Hamming distances of subsets on $\bits$.  
We show that for $A\subseteq\bits$ that if there is a sufficiently
high density of pairs at Hamming distance $r$ in $A$, then there is also a high
density of pairs at much larger Hamming distances.
This property is likely to be of independent interest since it can be viewed as a Sidorenko-type result~\cite{conlon, erdos-sim, sidorenko} for shortest paths in the $r$-th power of the hypercube.

\subsection*{Related Work} 

We review two baseline algorithms for the similarity join problem under any metric.
The first is a randomized join algorithm and has overhead $O(\sqrt{p})$.  We discuss the details of this universal all-pairs algorithm in Section 2.  The second baseline (known as the ``Ball-hashing-2'' algorithm in~\cite{afrati-fuzzy}) works by randomly partitioning the points $z$ of the metric space among the
processors and sending each input point $x$ to the processors associated
with all $z$ at distance at most $\ceil{r/2}$ from~$x$.  In other words, processors are responsible for unions of balls of radius $\ceil{r/2}$.  
The expected overhead of this algorithm is the size of balls of radius
$\ceil{r/2}$ in the metric space, which in the case of the Hamming distance
on $\bits$ we denote by $B(d,\ceil{r/2})$ and is 
$O(d^{\ceil{r/2}})$.

Recent work~\cite{afrati-fuzzy,afrati-anchor, afrati-dist} demonstrates improvements over the baseline algorithms for $\bits$ with Hamming distance.  For distance $r = 1$, Afrati~\ea~\cite{afrati-fuzzy} present an algorithm (called ``Splitting'') based on a fixed grid of subcubes. They analyze how subcube dimensionality affects the overall communication. We observe that choosing the dimension to be $\floor{\log_2 (n/p)}$ leads to an edge-covering  that achieves overhead $O(d/\log(n/p))$.  For $r >1$, in~\cite{afrati-fuzzy,afrati-dist} they also provide a variety of related algorithms.  They analyze subcubes for larger distances, but this is not as effective as the Ball-hashing-2 algorithm in terms of overall communication.  Their Ball-hashing-1 algorithm sends input points to processors based on balls of radius $r$ centered at input points.  However,  the Ball-hashing-1 communication scales with $B(d,r)$, worse than the Ball-hashing-2 algorithm.

The best previous algorithm for $r>1$ is the {\sc AnchorPoints} from~\cite{afrati-anchor}, which improves upon~\cite{afrati-fuzzy}.  The basic building block of their algorithm is a \emph{covering code} $\C$ of radius $r$, defined as a set of points $\C$ in $\bits$ with the property that every $x\in \bits$ is within distance $r$ of some element of $\C$.  After randomly partitioning the elements $z$ of the code $\C$ among
the processors, the algorithm sends each input point $x$ to every processor 
associated with an element of the covering code that is at distance at most
$\ceil{3r/2}$ from $x$.   In other words, processors are responsible for unions of balls of radius $\ceil{3r/2}$.  By using covering codes of density at most
$O(1/B(d,r))$, the work in \cite{afrati-anchor} improves on the overhead of the Ball-hashing-2
algorithm by roughly a $2^{\ceil{r/2}}$ factor.  
By basing our construction on nearly edge-isoperimetric sets and larger Hamming balls, 
our algorithm improves the previous bound by factor of roughly 
$(\frac{1}{2}\log_d (n/p))^{\ceil{r/2}}$.

Similarity search algorithms often employ Locality Sensitive Hashing (LSH)~\cite{charikar, HIM, kor}.  An LSH family is any distribution over hash functions such that near points map to the same bucket with a higher probability than far points.  For Hamming space, an example LSH family picks a subset of $k$ coordinates in $\{1,2,\ldots, d\}$ and maps a vector to its $k$-bit projection on these coordinates~\cite{HIM}.  LSH may be used as part of a similarity join algorithm by simply hashing the input points, checking all pairs of points within the same hash bucket, and outputting the close pairs.  The success and approximation of such an algorithm depends on $k$ and is amplified through independent trials. 

The randomness in classical LSH schemes~\cite{charikar, HIM, kor} leads to the
possibility of both false positives and false negatives, but recent work
remedies things for false negatives~\cite{coverlsh1, coverlsh2}. 
Their LSH family for Hamming distance covers all close pairs with a correlated family of subcubes. 

In a MapReduce setting, LSH buckets correspond to groups of points sent together.  Hashing a vector in $\bits$ by choosing $k$ coordinates corresponds to partitioning $\bits$ into subcubes of dimension $k$ and determining the subcube to which the vector belongs. 
Achieving a balanced load requires that each hash bucket contain a bounded
number of vectors. 
In contrast, all-pairs similarity search algorithms based on LSH ignore the
sizes of each bucket.  Instead they focus on the total work done by the
algorithm, and, over many rounds, maintain a counter for each vector and simply
abort after comparing it to too many far vectors~\cite{HIM, pagh:ioefficient}. 
In fact, as we show, the buckets maintained by LSH algorithms for distances
$r>1$ do not lead to the best load balance.


The approximate similarity join problem we study requires a sharp distance
threshold -- only producing pairs with distance at most $r$; false positives
can be easily filtered before being output. 
Therefore, our notion of approximation only refers to
the fraction of false negatives.  
This differs from problems such as Approximate Near Neighbor (ANN) search in
terms of the notion of approximation~\cite{harpeledbook}, where 
ANN benefits from allowing false positive pairs with distance $\le cr$, where
$c > 1$ determines the overall space usage and running time.  

There are well-known lower bounds for LSH, such as~\cite{ar-lb, motwaniLSH,odonnellLSH}, but these only apply for $r = \Omega(d)$. 
Indeed, they employ analytic techniques, such as the Bonami-Beckner inequality,
that do not capture the small distances that are the focus of our work.  

Panigrahy, Talwar, and Wieder~\cite{panigrahy:ann2008,panigrahy:ann2010} prove cell-probe lower bounds for randomized algorithms solving the approximate near neighbor search problem (recent improvements appear in~\cite{andoni:cellprobe}).   Their approach resembles ours in that they consider (a generalization of) the edge expansion of the similarity graph for a metric space (i.e., the graph with edges connecting every pair of points with distance at most $r$), though the 
specific application and techniques are different.

Recently, there has been work on proving strong conditional lower
bounds (assuming the strong exponential time hypothesis~\cite{ip:exptimehyp})
on the time complexity of computing all-pairs (approximate) nearest
neighbors~\cite{ahle:complexity-simjoin, alman2016, alman2015}.  However
these bounds do not apply to thresholded similarity join or to our model,
which measures the communication required for a single round.

There are also other kinds of algorithms for similarity search. 
In particular, there is a substantial body of work on
the design and analysis of data structures that first preprocess the data and 
then answer nearest-neighbor and range queries. 
Most of the analysis considers time complexity and space usage for serial
algorithms.  Therefore, the majority of these results are orthogonal to the
study of distributed similarity joins and edge coverings (cf. \cite{harpeledbook}).

For metrics other than Hamming distance, objects resembling edge-coverings
underlie state-of-the-art
algorithms for similarity joins on datasets of real vectors:
Working in Euclidean space,  Aiger, Kaplan, and Sharir~\cite{sharir} demonstrate an efficient edge-covering using randomly shifted and rotated grids.  
In the $\ell_\infty$ metric, Lenhof and Smid~\cite{lenhof} provide serial and shared-memory algorithms using a data-dependent partition of space into $\ell_\infty$ hypercubes. 
For angular distance, LSH schemes use random half-spaces or spherical caps as randomized edge-coverings~\cite{falconn, charikar, HIM, plsh, lsh-practical}. 

Many authors discovered the optimal edge-isoperimetric shape for the standard (distance
one) hypercube~\cite{bernstein, harper, hart, lindsey}.
Bezrukov's survey~\cite{bezrukov} contains a thorough, modern treatment.
Kahn, Kalai, and Linial~\cite{kkl} consider the generalization for $r$ greater
than one, but leave asymptotic results for all set sizes as a open question. 
Bollob{\'{a}}s and Leader~\cite{BollobasL96} prove a generalization of the
hypercube result for both Hamming and $\ell_1$ distance on vectors over a
larger alphabet (only for distance one in both cases).  
Our  edge-covering method implies that the Bollob{\'{a}}s and Leader result
can be used to design protocols for this space.  

Finally, in the combinatorics community, researchers have studied optimal coverings for pairs of elements in designs~\cite{chee, horsley}.  This relates to theoretical algorithms for parallel joins~\cite{bks}.  We exploit the additional structure of the metric space to design vastly more efficient protocols and algorithms than those for all-pairs joins when the number of processors is very large.

\section{Similarity Graphs, Edge-Coverings, and Overhead}


\paragraph{Similarity Graphs.} Any undirected graph $G=(V,E)$ can be viewed as defining a notion of similarity
on $V$ in which similar pairs are joined by an edge.  We study
\emph{edge-transitive} graphs, those having automorphisms that make
every neighborhood look like any other.  We particularly focus on the 
case that $V$ is the set $\bits$ equipped with the Hamming distance
$\dist(u,v)$ that equals the number of bits differing in $u,v \in \bits$.  
For a distance threshold $r \in [d]$, 
let $\G$ denote the Hamming similarity graph with vertices $\bits$ and edges 
connecting $u,v$ whenever $\dist(u,v)\leq r$, with self-loop edges for $u = v$. 
This regular graph has degree $B(d,r) \deq \binom{d}{r} + \binom{d}{r-1} + \ldots + d + 1$.
For a subset $A$ in $\bits$, let $\Er(A)$ denote the set of edges in $\G$ 
with both endpoints in $A$.  Similarly, define $\Gdr$ as the graph with edges connecting vertices with distance 
exactly $r$, and define $\Edr(A)$ as the set of edges in $\Gdr$ with both endpoints in $A$.  
It will be convenient to define notation for the average density of edges in the induced subgraphs
$\G[A]$ and $\Gdr[A]$, so we define $\er(A) \deq \frac{|\Er(A)|}{|A|}$
	and 
	$\edr(A) \deq \frac{|\Edr(A)|}{|A|}$. \vspace{.1in}

\subsection*{Similarity Join Algorithms and Overhead}
We study a broad, natural class of one-round, $p$-processor similarity join
algorithms.  
Each point $x$ in the input set $S$ is sent to some non-empty subset of processors $P(x)\subseteq [p]$.  The algorithm then uses local operations at each
destination processor to report all similar pairs among the elements it
receives from $S$.  
We allow the mapping $P$ to depend on the (approximate) size $n$ of
the set $S$ but not $S$ itself.  
The goal is to minimize the maximum number of elements
received by any processor, while ensuring that (almost) every similar pair
in $S$ goes to at least one processor.  

The ideal load balancing would be a load of $n/p$ elements per processor,
but we show that this is not always possible.
Randomization is essential in defining $P$, since for any deterministic mapping
$P$ any set $S\subseteq P^{-1}(i)$ for any $i\in [p]$ will yield a load of
$|S|$.
It is convenient to use an equivalent and convenient perspective by considering
the distribution $(P^{-1}(1),\ldots,P^{-1}(p))$ of tuples of sets produced by
the algorithm.  This emphasizes the combinatorial aspects of the problem, and
is formalized in the following definition in terms of randomized edge-coverings.

\begin{definition}
A \emph{randomized $(p,\delta)$-edge covering} for the $n$-subsets of $\G$
is a distribution $\A$ on $p$-tuples
$(A_1,\ldots,A_p)$ of subsets of $\bits$ such that for every subset
$S\subseteq\bits$ of size $n$,
\begin{equation*}
\E_{(A_1,\ldots,A_p)\sim \A}\left[|\bigcup_{i} \Er(A_i\cap S)|\right]\ge
\delta\cdot |\Er(S)|.
\end{equation*}
There is an analogous definition of a $(p,\delta)$-edge covering for $n$-subsets
of $\Gdr$ with $\Edr$ replacing $\Er$.
\end{definition}

As a way to capture the maximum load of randomized algorithms, our primary complexity measure will be the  expectation of the maximum ratio of $|A_i\cap S|$ to $|S|/p$.  

\begin{definition}
Let $\A$ be a distribution on $p$-tuples $(A_1,\ldots,A_p)$ of subsets of
$\bits$.
Let $S\subseteq \bits$ with $p \leq |S|$.  
Define the \emph{overhead} of $\A$ on set $S$,
\begin{equation*}
	\mathsf{overhead}(\A,S) =  \E_{(A_1,\ldots,A_p)\sim \A}
	\left[ \ \max_{i \in [p]} |A_i \cap S|\cdot \frac{p}{|S|}\right ],
\end{equation*}
where the expectation is over the randomness of $\A$.  The $n$-\emph{overhead} of $\A$ is defined as
\begin{equation*}
	\o_n(\A) = \max_{S\subseteq \bits,\ |S|=n} \o(\A,S)
\end{equation*}
\end{definition}

Note that defining the overhead in terms
of the {\em maximum} value of $|A_i\cap S|$ is critical, since a
trivial algorithm that sends $S$ entirely to a random processor achieves
average load exactly $|S|/p$. 

Any randomized $(p,\delta)$-edge covering $\A$ 
yields a natural one-round, $p$-processor, data-parallel algorithm 
for computing a $\delta$-approximate similarity join.   First, choose
$(A_1,\ldots,A_p)\sim\A$ using shared randomness.  Then, send $x$ 
to processor $i$ whenever $x\in A_i$.  In other words, on input $S$, processor $i$ receives $A_i \cap S$ for $i \in [p]$.
Over the random choices of the algorithm, this strategy outputs in expectation at least a
$\delta$ fraction of all pairs in $S$ with Hamming distance at most $r$.

These 
\emph{local} algorithms capture a very
broad class of natural algorithms for similarity joins including the ones
discussed in the introduction.
For example, in the ball-hashing-2 algorithm, 
the sets $A_1,\ldots,A_p$ are unions over a random partition 
of Hamming balls of radius~$\lceil r/2 \rceil$.   

\paragraph{A Universal All-pairs Algorithm.}
As another example of such algorithms we present the other baseline
load-balancing algorithm mentioned in the introduction which works for any
notion of similarity and yields overhead $O(\sqrt{p})$.  

\begin{proposition}
There is a randomized 1-round local algorithm  for any similarity measure that
has expected per-processor load $O(|S|/\sqrt{p})$ on all sets $S$ with
$p\le |S|$.
\end{proposition}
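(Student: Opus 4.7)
The plan is to use the classical ``grid-assignment'' protocol, which is oblivious to the underlying similarity measure. Assume first that $p = q^2$ is a perfect square (otherwise pad up to $\lceil\sqrt{p}\,\rceil^2 \le 4p$ processors, losing only a constant factor) and identify the $p$ processors with the cells of a $q\times q$ grid. Using shared randomness, draw a uniformly random hash $h:\bits\to[q]$ and send each input point $x$ with $h(x)=c$ to every processor in row $c$ and every processor in column $c$; in the edge-covering language, the distribution $\A$ is the law of the tuple $(A_{ij})_{(i,j)\in[q]\times[q]}$ with
\[
A_{ij} \;=\; h^{-1}(i)\,\cup\, h^{-1}(j).
\]
Each point is thereby replicated to exactly $2q-1$ processors.

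The covering property holds for \emph{any} similarity relation: for every pair $x,y\in S$ with $h(x)=i$ and $h(y)=j$, both land at processor $(i,j)$, so with probability~$1$ over the shared randomness every similar pair of $S$ is discovered. For the load analysis, $|A_{ij}\cap S| = \sum_{x\in S}\mathbf{1}[h(x)\in\{i,j\}]$ is a sum of independent Bernoullis of mean $\mu\le 2|S|/\sqrt{p}$, which already gives each fixed processor expected load $O(|S|/\sqrt{p})$ by linearity of expectation, matching the proposition's statement. For the stronger bound on the expected \emph{maximum} load (i.e.\ overhead $O(\sqrt{p})$ in the sense of the earlier definition), the hypothesis $p\le|S|$ gives $\mu\ge 2\sqrt{p}$, so a standard Chernoff bound produces $\Pr[|A_{ij}\cap S|>C|S|/\sqrt{p}]\le e^{-\Omega(\sqrt{p})}$ for some absolute constant $C$. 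Union-bounding over the $p$ processors and using the deterministic bound $|A_{ij}\cap S|\le|S|$ on the remaining tail then yields $\E[\max_{i,j}|A_{ij}\cap S|] = O(|S|/\sqrt{p})$.

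There is no substantive obstacle in this argument. The only step requiring any attention is the concentration bound on the maximum, and the hypothesis $p\le|S|$ is precisely what makes $|S|/\sqrt{p}\ge\sqrt{p}$ large enough that the $e^{-\Omega(\sqrt{p})}$ tail dominates the $p$-way union bound; everything else follows from linearity of expectation and the symmetry of the random hash.
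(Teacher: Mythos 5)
Your proposal is correct and takes essentially the same approach as the paper: both assign each point a random bucket in $[q]$ and replicate it to all processors indexed by pairs involving that bucket (the paper uses unordered pairs with $p=\binom{q}{2}$; you use an ordered $q\times q$ grid with $p=q^2$, which is an immaterial difference). You also spell out the Chernoff/union-bound step that the paper compresses into ``almost surely $O(|S|/q)$,'' correctly identifying that the hypothesis $p\le|S|$ is what makes the tail negligible against the $p$-way union bound.
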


\begin{proof}
Assume without loss of generality that $p=\binom{q}{2}$ for integer $q$.
Choose a random mapping $h:\bits\rightarrow [q]$.  Now send $x\in \bits$
to processors indexed by the pair $\set{h(x),i}$ for every $i\in [q]$ not
equal to $h(x)$.  Every pair of inputs in $S$ will be seen by some processor.
The per-processor load is almost surely $O(|S|/q)$.
\end{proof}

This yields a randomized $(p,1)$-edge-covering of $\bits$
 with overhead $O(\sqrt p)$ for all set sizes.
In Appendix~\ref{otherbounds} we note that for very small sets $S$, this bound
essentially cannot be improved, even on $\G$.  
For a limited number of available processors, this algorithm is 
still useful in practice, even for large datasets.
However, many important cases involve sets $S$ that are a much larger
portion of $\bits$.  In fact, some previous work on similarity join on
the binary cube has often focused on the case that the input set size is
completely at the other extreme -- a constant fraction of all possible vectors.
We consider a wide range of input sizes and not just 
these extremes.


\begin{definition}[\textbf{Main Complexity Measure}]
Let $\ohbound$ be the minimum of $\o_n(\A)$
over all randomized
$(p,\delta)$-edge coverings for  $n$-subsets of $\G$.  Analogously define $\eqohbound$ for $\Gdr$. 
\end{definition}

Lower bounds on $\ohbound$ are lower bounds on the amount of
overhead required for any randomized $p$-processor 1-round local algorithm
that approximately solves similarity join on the binary cube with Hamming
distance threshold $r$ for subsets $S\subseteq \bits$ of size $n$.
Upper bounds on $\ohbound$ and $\eqohbound$
will be particularly interesting when $\delta=1$, which implies no error, or
when $\delta$ is sufficiently close to 1, which implies the probability of missing any
similar pair in $S$ is extremely small.

\paragraph{Comparison to MapReduce Measures.} 
In MapReduce algorithms, there is a collection of individual tasks called
\emph{reducers} that are randomly assigned to one of the $p$ processors.  
Previous similarity join algorithms (as well as our edge-isoperimetric
algorithm) define reducers for each subset  $R_1,\ldots,R_{K}$ of
$\bits$ for some $K \gg p$.  Then, during a shuffling phase, the sets
$A_1,\ldots, A_p$ are formed randomly as unions of some of the $\{R_j\}$.   
The key complexity measure analyzed in the previous
work~\cite{afrati-fuzzy,  afrati-anchor, afrati-dist} is the
\emph{replication rate}, defined as the average number of reducers to which
each point is sent.  We claim that overhead is a better measure.

The replication rate is a useful measure only when the reducers are small
enough that they can be mapped randomly and uniformly to yield tasks that are nicely load-balanced across the processors. Our measure, overhead, captures a much wider class of algorithms than replication rate can. Moreover,  in the regime where  replication rate is a useful measure, overhead essentially equals replication rate.  The main advantage of overhead concerns very large reducers.  For example, if one reducer was responsible for essentially the whole input, the replication rate would be one, but this algorithm would behave badly in
practice and has the worst possible overhead of $p$. Therefore, overhead both captures the lower bounds using replication rate and defines a measure of load balancing for large reducers. 
%

\section{Our Results}\label{sec:results}

\begin{theorem}\label{thm}\label{mainlb} 
Let $0<\delta\le 1$ and $r,d,n,p$ be such that $r\le d$, $p \leq n\le 2^d$, and
$\g = \log_n p$.

\begin{enumerate}[(a)]
\item For all integers $k$ with $\ceil{r/2}\le k\le (1-\gamma)\log_d n$
there is a MapReduce algorithm witnessing the bound
\begin{align*}
&\ohbound
\leq \max\left\{6 \cdot 2^{\ceil{r/2}}\ln\left(\frac1{1-\delta}\right) \cdot  \left(\frac{d}{k}\right)^{\ceil{r/2}}, \ \ 9\log_2 p\right\}
\end{align*} 
The algorithm has reducer size $B(d,k)$, and with probability $1-2^{-d}$, it maps each input to at most 
\begin{equation*}
7\cdot \max\left\{d \ln 2,\ \ 6 \cdot 2^{\ceil{r/2}}\ln\left(\frac1{1-\delta}\right) \cdot  \left(\frac{d}{k}\right)^{\ceil{r/2}}\right\}
\end{equation*}
processors.
Moreover, with $\delta=1-1/n^3$, the edge-covering of the
input is error-free with probability at least $1-1/n$.  
\item  
There is constant $c_\delta>0$ such that for every $r,n,d,p,\delta$ with 
$r\le \sqrt{d/2}$, $n \geq d^{r\log_2 d}$,
and $\delta\ge 4/\sqrt{d}$,
\begin{equation*}\ohbound\ge (c_\delta/r)^r\cdot (d/\log^2_d n)^{\gamma r/2}.\end{equation*}
Further, if $\gamma r\le 1$ then
\begin{equation*}\ohbound\ge (c_\delta/r)^r\cdot (d/\log_d n)^{\gamma r}.\end{equation*}
\item  The bounds given in (a) and (b) also hold with $\ohbound$ replaced by $\eqohbound$.
\end{enumerate}
\end{theorem}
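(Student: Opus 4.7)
For part (a), I would build the algorithm from random Hamming-ball reducers: sample $K$ anchors $z_1,\ldots,z_K$ i.i.d.\ uniformly from $\bits$, form reducers $R_j=\Ball(z_j,k)$ of size $B(d,k)$, assign the $K$ reducers uniformly at random to the $p$ processors, and send each input point to every processor holding a reducer that contains it.  The covering analysis fixes a pair $(x,y)$ at distance $\ell\le r$ and notes that $R_j$ captures it iff $z_j\in \Ball(x,k)\cap \Ball(y,k)$, so a single random ball captures it with probability $f(\ell)=|\Ball(x,k)\cap \Ball(y,k)|/2^d$.  A direct count---split the coordinates by whether they lie in the support of $x\oplus y$ and sum over the number of bits on which $z$ agrees with $x$ there---gives, for $\lceil r/2\rceil\le k$, a lower bound of the form $f(\ell)\gtrsim B(d,k)(k/d)^{\lceil \ell/2\rceil}/2^d$ with the worst case at $\ell=r$.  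Choosing $K$ minimally so that $Kf(r)\ge \ln(1/(1-\delta))$ yields an expected $\delta$-edge-covering.  For load balance, a point lies in a random ball with probability $B(d,k)/2^d$, so the expected per-processor load is $KnB(d,k)/(p\cdot 2^d)$; the hypothesis $k\le(1-\gamma)\log_d n$ ensures $B(d,k)\le n/p$, so a Chernoff bound on the sum of reducer loads within each processor concentrates the maximum load to a constant multiple of the mean provided the latter is at least $\Omega(\log p)$, which is exactly the max against $9\log_2 p$ in the stated bound.  The per-input replication bound follows from a Chernoff bound on $\mathrm{Bin}(K,B(d,k)/2^d)$, and the error-free version ($\delta=1-1/n^3$) holds by a union bound over the $\le \binom{n}{2}$ pairs in $S$.

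For part (b), I would apply Yao's principle and construct a hard input distribution: let $S$ be a random subset of $\Ball(0^d,k^\ast)$ with $k^\ast\approx \log_d n$ chosen so that $|\Ball(0^d,k^\ast)|\approx n$, keeping each element independently with constant probability.  Fix any realization $(A_1,\ldots,A_p)$ in the support of the covering $\A$ and argue by a dichotomy on $T=\sum_i|A_i\cap \Ball(0^d,k^\ast)|$.  If $T$ is large, an averaging argument exhibits some $A_i$ whose intersection with $\Ball(0^d,k^\ast)$ is large, and since the sub-sampling is independent of $\A$, a concentration step transfers this to $|A_i\cap S|$ and directly delivers the stated overhead lower bound.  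If $T$ is small, for the covering to capture $\delta\,|\Er(S)|$ pairs the sets $A_i$ must pack many distance-$r$ pairs, forcing some $A_i$ to have unusually large $\edr(A_i)$.  The key step is then the promised Sidorenko-type distance-correlation inequality for the Hamming cube: whenever $\edr(A)$ exceeds a threshold, $\erR(A)$ is forced above a corresponding threshold for some $R=R(r)\gg r$; since the distance-$R$ pairs inside $A$ are limited by a simple upper bound on $|\ErR(A)|$ in terms of $|A|$, this forces $|A_i|$ itself to be large, contradicting the small-$T$ assumption.  The two regimes in the bound, $(d/\log^2_d n)^{\gamma r/2}$ in general and $(d/\log_d n)^{\gamma r}$ when $\gamma r\le 1$, correspond to optimizing the choice of $R$ and of the sub-sampling rate in each case.

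For part (c), the upper bound in (a) already handles each distance $\ell\le r$ separately, with the worst case at $\ell=r$, so the same algorithm $\delta$-covers $\Gdr$ and the overhead bound transfers verbatim.  For the lower bound, the argument of (b) is already phrased in terms of $\edr(A)$ rather than $\er(A)$, and a layer-counting observation on Hamming balls shows that the random input $S$ satisfies $|\Edr(S)|=\Theta(|\Er(S)|)$, so the same dichotomy and the same distance-correlation inequality deliver the analogous bound with $\eqohbound$ in place of $\ohbound$.

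The main obstacle, and the genuinely new technical content of the theorem, is the distance-correlation inequality underlying part (b).  The upper-bound calculations in (a) and the reductions in (c) are routine once the algorithm and the hard-instance distribution have been chosen; in contrast, pushing a density of $\edr$ up to a density of $\erR$ for $R$ much larger than $r$ appears to require a genuinely new Sidorenko-type argument on shortest paths in the $r$-th power of the hypercube.
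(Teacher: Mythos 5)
Your overall architecture matches the paper's: part (a) builds a randomized edge-covering from uniformly random Hamming balls of radius $k$ assigned to processors, with coverage controlled by the number of balls and load controlled by a Chernoff/Bernstein argument (the paper uses Bernstein since the ball-intersection random variables are bounded by $B(d,k)$); part (b) runs Yao's minimax with a Hamming-ball hard distribution and the Sidorenko-type distance-correlation lemma; part (c) piggy-backs on (b). You also correctly identify the distance-correlation inequality as the load-bearing new technical ingredient, so the overall plan is sound.

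However, there are two substantive differences in part (b) that you gloss over and that the paper spends real effort on. First, the covering guarantee for $\A$ is only an \emph{expectation} over the draw of $(A_1,\ldots,A_p)$, so you cannot just ``fix any realization in the support'' and assume it captures a $\delta$-fraction of $\Er(S)$; there may be realizations that achieve tiny overhead precisely because they cover almost nothing. The paper addresses this (Proposition~\ref{mostlybig} and Lemma~\ref{yao}) by defining $A^{good}_\delta$, showing via Markov that $\A$ places $\ge\delta/2$ mass on $A^{good}_\delta$, and then proving the overhead bound only for tuples in $A^{good}_\delta$. Your sketch skips this reduction, and without it the dichotomy argument does not get off the ground. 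Second, you use a \emph{fixed} ball $\Ball(0^d,k^*)$ and convert ``$\eR(A_i)$ large'' into ``$|A_i|$ large'' via the trivial bound $|\ER(A)|\le\binom{|A|}{2}$. The paper instead uses a \emph{random} center and a pruning lemma (Lemma~\ref{pruninglemma}, giving $w(x)\le B(d,r)-1$) to lower bound the expected max load by a weighted sum $\sum_i|\ER(A_i)|$ in Lemma~\ref{lemma:max-to-ER}. The random center is not cosmetic: it is what makes the ``max over $i$ with $x\in A_i$'' reducible to an average, and hence to $\sum_i|\ER(A_i)|$. Your fixed-center route avoids pruning, but you should verify carefully that the crude $|A_i|\ge 2\eR(A_i)$ step plus a pigeonhole on $T$ actually reproduces the stated exponents; a back-of-the-envelope check suggests it can, but the constants and the exact choice of $b=\lfloor\gamma r/2\rfloor$ in the lemma's application are delicate, and the paper's route via $\E_{i\sim\mu}[\eR(A_i)]$ and Jensen is the cleaner path to the claimed $t^*$.

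A smaller point on part (a): you reason via the worst-case fixed-pair capture probability $f(r)$, whereas the paper uses the average capture probability $1/\ell_k=|\Er(\Ball(0^d,k))|/|\Er(\bits)|$ and invokes edge-transitivity of $\G$. Since $\G$ for $1<r<d$ is not literally edge-transitive (automorphisms preserve Hamming distance, so distance-$1$ edges cannot map to distance-$2$ edges), your fixed-pair accounting is actually the more careful version of this step; the two give the same asymptotics because the worst case $\ell=r$ dominates. And on part (c), the paper's argument for $\eqohbound$ does not hinge on $|\Edr(S)|=\Theta(|\Er(S)|)$; rather, it replaces the $A^{good}_\delta$ condition with the exact-distance analogue coming from Proposition~\ref{exact-edge-density} (which is the place where $r\le\sqrt{d/2}$ is used). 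Your phrasing points at the same fact but misattributes where it enters.
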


Observe that our lower bound in (b) (in its range of applicability) is
qualitatively very similar to the upper bound in (a) with maximal choice of
$k$, especially as the number of processors approaches the set size.
The only previous lower bound on replication rate or overhead for similarity
join \cite{afrati-dist} used the edge-isoperimetric bound for
$r=1$ to derive a lower bound of $(\log_2 n)/\log_2 q$ 
where $q$ is an upper bound on the reducer size for input sets of size
$n$ that are dense subsets of a subcubes;
however, for $q$ as large as $n/p$, the regime in which our bounds apply, 
this only yields a nearly trivial lower bound of $1/(1-\gamma)$.

We outline the upper bound improvements compared to previous
algorithms~\cite{afrati-fuzzy, afrati-dist, afrati-anchor}. 
Our upper bound with maximal $k$ comes from an algorithm with overhead $O(2d/\log_m (n/p))^{\ceil{r/2}}$. 

This improves on the
Ball-hashing-2 algorithm~\cite{afrati-fuzzy}, which has overhead
$(2d/r)^{\ceil{r/2}}$ using reducers of size $B(d,\ceil{r/2})$. Our algorithm also improves over the 
{\sc AnchorPoints} algorithm in~\cite{afrati-anchor}, which 
has overhead $O(B(d,\ceil{3r/2})/B(d,r))$, approximately $(d/r)^{\ceil{r/2}}$,
using reducers of size $B(d,\ceil{3r/2})$.  We finally note that the subcube algorithm~\cite{afrati-fuzzy,afrati-dist} has good overhead when $r=1$
and is not improved by our results, but it is not competitive for $r>1$.

Our algorithm is best for large $k$ with reducers of size $B(d,k) \approx n/p$. We improve the overhead
by up to a $(\frac{1}{2}\log_d (n/p))^{\ceil{r/2}}$ factor,
taking advantage of larger reducers to obtain better bounds.
For a comparison using a 
natural range of parameters, consider $d~=~\left(\log_2 \sqrt{n}\ \right)^2$ and $p = 2^{\sqrt{d}}=\sqrt{n}$. Our overhead for even $r$ scales like 
$(d\log_2 d)^{r/4}$, which improves over the best previous 
algorithm by a factor of roughly $(d/r^2 \log_2 d)^{r/4}$.  

Our algorithm is nearly-optimal for $n = 2^{\Theta(d)}$ and
$p = n^\gamma$ for constant $\gamma<1$.
In this case, choosing $\delta=1-1/2^{3d}$
we obtain an algorithm that is almost certainly correct and has overhead
only $2^{O(r)}d\log^{\ceil{r/2}} d$, since $\log_2 p\le d$.  This is a significant improvement over the best previous algorithm~\cite{afrati-anchor} which has overhead and replication rate 
$(d/r)^{\ceil{r/2}}$ in this regime.

\section{Randomized Edge-Coverings using Edge-Isoperimetric Sets}\label{sec:edge-iso}

In this section, we describe our randomized edge-covering for $\G$ with overhead achieving the upper bound in Theorem~\ref{thm}.  This randomized edge-covering defines a one-round, data-parallel protocol for reporting all pairs in $S \subseteq \bits$ with Hamming distance at most $r$.  It thus also provides a MapReduce algorithm for Hamming similarity joins.   

We begin with our general protocol for arbitrary edge-transitive similarity
graphs, since it is easy to state and explains the intuition.  We then specialize this general protocol to Hamming distance and provide the specific analysis in this case.

\subsection*{General Edge-Transitive Graph Covering}
We build a randomized $(p,\delta)$-edge-covering for $G$ using unions of random translates of a fixed set $U^* \subseteq V$.
An \emph{edge-isoperimetric set} of size $s$ is any set $U^*(s)$ of vertices in $G$ that maximizes the number of edges among all induced subgraphs with $s$ vertices.  We will use translates of $U^* \deq U^*(s)$ for $s\le n/p$.
Let $\pi$ be a graph automorphism on $G$.  
Define $\pi(U^*)$ to be the set of
vertices $\pi(x)$ for each $x \in U^*$. 
By edge-transitivity, $\pi(U^*)$ contains the same number of edges as $U^*$ does.

To construct the $(p,\d)$-edge covering, we choose a parameter $L$ and
choose $Lp$ automorphisms $\pi_1,\ldots,\pi_{Lp}$ of $G$
uniformly at random, and define
\begin{equation*}A_i=\bigcup_{j\ \equiv\ i\mathrm{\ mod\ }p} \pi_j(U^*).
\end{equation*}
With $Lp$ large enough compared to the ratio of the number of edges in $G$ to
that of $U^*$, the sets $A_i$ cover a $\delta$-fraction of edges in $S$ with high probability. As we show, the fact that $|U^*|\le n/p$ ensures
a bounded overhead via Bernstein's concentration inequality. 

When interpreted as a one-round MapReduce algorithm, the construction above admits good bounds on both the maximum reducer size and the maximum number of times a vertex is replicated by the mappers. 

\begin{lemma}\label{lemma:reducer-size}  Let $U^*$ be a subset of vertices in $G=(V,E)$. The MapReduce algorithm that chooses $Lp$ uniformly random automorphisms $\pi_1,\ldots,\pi_{Lp}$ and maps $x \in V$ to the reducer assigned $\pi_j(U^*)$ whenever $x \in \pi_j(U^*)$  has reducer size $|U^*|$, and with probability $1 - |V|^{-1}$, it maps each vertex 
to at most $7 \cdot \max\{\ln |V|, \ Lp|U^*|/|V|\}$ reducers.  
\end{lemma}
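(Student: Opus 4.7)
The reducer-size bound is immediate: each $\pi_j$ is a bijection on $V$, so $|\pi_j(U^*)| = |U^*|$, and the reducer indexed by $j$ receives exactly the points of $\pi_j(U^*)$. The substance is therefore entirely in the tail bound on the number of reducers that receive a given vertex $x$.

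The plan is to fix $x \in V$ and let $X_j$ be the indicator that $x \in \pi_j(U^*)$, so that the number of reducers receiving $x$ equals $X = \sum_{j=1}^{Lp} X_j$. Because the paper's setting assumes the automorphism group acts transitively on vertices (this is inherent in the edge-transitive / automorphism-rich setting being used), for a uniformly random $\pi$ and any fixed $x \in V$ one has $\Pr[x \in \pi(U^*)] = |U^*|/|V|$; indeed, $x \in \pi(U^*)$ iff $\pi^{-1}(x) \in U^*$, and $\pi^{-1}(x)$ is uniformly distributed on $V$. Since the $\pi_j$ are chosen independently, the variables $X_1,\dots,X_{Lp}$ are i.i.d.\ Bernoulli with mean $q \deq |U^*|/|V|$, giving $\mu \deq \E[X] = Lp\cdot |U^*|/|V|$.

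Next I would apply a standard multiplicative Chernoff bound in the form $\Pr[X \ge t] \le (e\mu/t)^t$ valid for $t \ge \mu$, and split on which term of the max dominates. Set $T = 7\max\{\ln|V|,\, \mu\}$. In the regime $\mu \ge \ln|V|$ we have $t = 7\mu$, so the bound is $(e/7)^{7\mu} \le (e/7)^{7\ln|V|} = |V|^{-7\ln(7/e)}$, and since $7\ln(7/e) > 6.5$ this is at most $|V|^{-2}$. In the regime $\mu < \ln|V|$ we have $t = 7\ln|V| \ge 7\mu$, and the same inequality $(e\mu/t)^t$ is monotone in a way that again yields $(e/7)^{7\ln|V|} \le |V|^{-2}$. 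A union bound over the $|V|$ choices of $x$ then shows that with probability at least $1 - |V|^{-1}$ every vertex is replicated to at most $7\max\{\ln|V|,\, Lp|U^*|/|V|\}$ reducers, as desired.

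The main conceptual obstacle is simply verifying vertex-transitivity of the automorphism action to get the per-trial success probability $|U^*|/|V|$ exactly; beyond that the argument is a routine Chernoff-plus-union-bound calculation, and the constant $7$ in the statement is chosen precisely so that $7\ln(7/e) > 2$, giving enough room for the final union bound. Everything else (independence across $j$, identical distribution, the clean form of $\mu$) falls out of the construction of the $\pi_j$.
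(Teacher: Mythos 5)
Your proposal is correct and follows essentially the same route as the paper's proof: fix $x$, use transitivity of the automorphism action to get the per-trial inclusion probability $|U^*|/|V|$, apply a Chernoff bound to the sum of independent indicators, and union bound over $V$. You have simply filled in the Chernoff calculation and the justification for the per-trial probability that the paper leaves implicit.
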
  
\begin{proof}
Any $x \in V$ is contained in $\pi_j(U^*)$ with probability $|U^*|/|V|$.  The expected number of indices $j$ such that $x \in \pi_j(U^*)$ is $Lp|U^*|/|V|$.  By independence of the automorphisms, a standard Chernoff bound implies that $x$ is mapped to more than $7 \cdot \max\{\ln |V|, \ Lp|U^*|/|V|\}$ reducers with probability at most $|V|^{-2}$.  Taking a union bound over $V$ finishes the proof.
\end{proof}

We now analyze the above algorithm for $\G$.  

\subsection*{Edge-Isoperimetry for Hamming Distance}

The classical edge-isoperimetric result for $\Gamma_{1}$ states that $s$ vertices
contain at most $\frac{1}{2}s\log_2 s$ edges~\cite{bernstein, harper, hart, lindsey}.  
The optimal set is a subcube when $\log_2 s$ is integral, and it is a natural interpolation between subcubes for other $s$ values.

For $\G$ with $r>1$, an exact edge-isoperimetric inequality is not known.
Surprisingly, optimal shapes must be far from subcubes.  For example, Hamming balls, which are essentially the optimal \emph{vertex-isoperimetric}
sets for $\Gamma_1$, are not only much better than subcubes, they are also
not too far from optimal edge-isoperimetric sets for $\G$.
We give an easy argument for very rough approximate optimality of Hamming
balls below and state a much sharper bound that was recently shown by the
second author and co-author~\cite{er:isoperimetry} in parallel work. 

We start with preliminaries about the number of edges in a Hamming ball.  We will lower bound the average degree in $\G$ of a ball of radius $k$, that is, the value $\er(\Ball( 0^d, k))$.

\begin{proposition}\label{ball-ub} For $k \in [d]$ with $\ceil{r/2} \leq k$, we have 
\[ 
\er(\Bk)\geq \frac{1}{2}\cdot \binom{k}{\ceil{r/2}} \cdot B(d-k,\floor{r/2})
\]
\end{proposition}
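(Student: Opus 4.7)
The plan is to reduce the desired bound on $\er(\Bk)=|\Er(\Bk)|/|\Bk|$ to a bound on the minimum induced degree in $\Bk$, show this minimum occurs at the boundary ($|x|=k$), and count the corresponding neighbors explicitly.

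For each $w \in \{0,\dots,k\}$, let $D(w)$ be the number of $y \in \Bk$ with $\dist(x,y)\le r$ for any fixed vertex $x$ of weight $w$; this depends only on $w$, by the coordinate-permutation symmetry of $\Bk$. Since $\sum_{x\in \Bk} D(x)=2|\Er(\Bk)|-|\Bk|$ (each non-loop edge counted twice, each self-loop once), it suffices to show
\[
\frac{1}{B(d,k)}\sum_{w=0}^{k}\binom{d}{w}\,D(w)\;\ge\;\binom{k}{\ceil{r/2}}\,B(d-k,\floor{r/2}),
\]
from which one obtains $\er(\Bk)\ge \tfrac12 \binom{k}{\ceil{r/2}} B(d-k,\floor{r/2})$ immediately. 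I plan to deduce this inequality from two claims: (i) $D(k)\ge \binom{k}{\ceil{r/2}} B(d-k,\floor{r/2})$, and (ii) $D(w)\ge D(w+1)$ for every $0\le w<k$, so that $D$ is minimized at $w=k$.

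Claim (i) I would verify directly. Fix $x$ of weight $k$; for each subset $S \subseteq \supp(x)$ of size $\ceil{r/2}$ and subset $T \subseteq [d]\setminus \supp(x)$ of size at most $\floor{r/2}$, the vector $y = x \oplus (S \cup T)$ satisfies $\dist(x,y) = \ceil{r/2}+|T| \le r$ and $|y| = k-\ceil{r/2}+|T| \le k$, hence $y \in \Bk$. Distinct pairs $(S,T)$ give distinct $y$, yielding the required count.

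The hard part is claim (ii). I would prove it by constructing an injection $\phi: N_{\Bk}(x') \hookrightarrow N_{\Bk}(x)$, where $x$ has weight $w$ and $x' = x \oplus e_i$ for any zero-coordinate $i$ of $x$, so $|x'|=w+1$. Set $\phi(y')=y'$ except when $y'_i=1$ and $\dist(x',y')=r$, in which case set $\phi(y')=y'\oplus e_i$. Using $\dist(x,z)-\dist(x',z) = 1 - 2z_i$ (valid since $x$ and $x'$ differ only on bit $i$), a short case analysis verifies $\phi(y')\in N_{\Bk}(x)$. The most delicate point will be injectivity: if $\phi$ were to collide on $y'_1$ (identity case) and $y'_2$ (flip case), then $y'_1 = y'_2 \oplus e_i$ would force $\dist(x',y'_1) = \dist(x',y'_2)+1 = r+1$, contradicting $y'_1 \in N_{\Bk}(x')$. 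With both claims in hand, the proposition follows.
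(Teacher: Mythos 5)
Your proof is correct and follows essentially the same route as the paper: reduce the bound on $\er(\Bk)$ to a bound on the minimum degree in $\G[\Bk]$ via the handshake identity, show the degree $D(w)$ of a weight-$w$ vertex is minimized at $w=k$ by a one-bit exchange argument, and then count the degree of a boundary vertex directly by choosing $\ceil{r/2}$ ones to drop and at most $\floor{r/2}$ zeros to raise. One small sign typo: with $x'=x\oplus e_i$ for a zero-coordinate $i$ of $x$, the identity should read $\dist(x,z)-\dist(x',z)=2z_i-1$ rather than $1-2z_i$; your definition of $\phi$ and the case analysis are already consistent with the corrected sign, so this does not affect the argument.
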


\begin{proof} 
Observe that $\er(\Bk)\ge \frac{1}{2} \displaystyle \min_{x\in \Bk} |\Bk\cap \Ball(x,r)|$.

We first argue that the minimum occurs when $|x|=k$.  Indeed, let $|x'|<k$
and $|x|=|x'|+1$ such that $x\oplus x'$ contains a single non-zero coordinate
$j$. \\\\
{\sc Claim:} $|\Bk\cap \Ball(x',r)|\ge |\Bk\cap\Ball(x,r))|.$\\\\
First note that for any $z\in \bits$, $\Ball(z,r)=z\oplus \Ball(0^d,r)$.
Let $y\in \Ball(0^d,r)$ have $y_j=0$ and let $y'$ be the same as $y$ except
that bit $y_j=1$.
Now $x'\oplus y=x\oplus y'$ and $x\oplus y=x'\oplus y'$.  Therefore, 
(i) the number of such $y$ for which $x'\oplus y\in \Bk$ but
$x\oplus y\notin \Bk$
is equal to (ii) the number of such $y'$ for which
$x\oplus y'\in \Bk$ but $x'\oplus y'\notin \Bk$.
Observe that (i) counts $\Bk\cap \Ball(x',r)\setminus (\Bk\cap \Ball(x,r))$
and (ii) counts a (strict) superset of
$\Bk\cap \Ball(x,r)\setminus (\Bk\cap \Ball(x',r))$ since for any
$z'\in \Ball(0^d,r)$ with $z'_j=1$, the element $z$ equal to $z'$ except that
$z_j=0$ is also a member of $\Ball(0^d,r)$.  (The strictness follows because
for those $y\in\Ball(0^d,r)$ with $|y|=k$ have $y'\notin\Ball(0^d,r)$.) 
This proves the claim which implies that the minimum occurs when $|x|=k$.

Now fix some $x\in \Bk$ with $|x|=k$.
We count vectors $z$ in $\Bk\cap \Ball(x,r)$. 
The vector $z$ may be obtained from $x$ by flipping $i$ bits from one to zero,
and $j$ bits from zero to one, as long as  
$i\le k$, $i + j \leq r$, and $j \leq i$.
Therefore, by summing over $(i,j)$ that meet these conditions, the number of 
vectors $z$ is at least
$
\displaystyle \sum_{i\le \min\{r,k\}}\sum_{j\le \min\{r-i,i\}} \binom{k}{i} \binom{d - k}{j}. \ 
$
This is a lower bound on the minimum, and thus average, degree of $\Bk$ in $\G$.
\end{proof}

\paragraph{Hamming Balls are nearly optimal for $r>1$.}  

We sketch the ideas behind the edge-isoperimetric inequality for $r > 1$.  Consider $A \subseteq \bits$.  Using standard down-shifting arguments, we may assume $A \subseteq \Ball(0^d, \floor{\log_2 |A|})$.  Thus, the average degree of $A$ in $\G$ is  roughly bounded by that of a ball of radius $\floor{\log_2 |A|}$.  In particular, $\er(A)\le B(d,\floor{r/2})\cdot B(\floor{\log_2 |A|},\ceil{r/2})$.  Ellis and Rashtchian~\cite{er:isoperimetry} improve this estimate by roughly a factor of $(\log_2 (d/\log_2 |A|))^{\floor{r/2}}$, nearly matching the Hamming ball upper bound. 

%
%
%
%

\subsection{Proof of Theorem~\ref{thm}(a)}

\ \ Now we prove the upper bound in our main theorem by exhibiting a randomized $(p,\d)$-edge-covering for sets of size $n$ in $\G$ achieving the claimed
overhead.
\newcommand{\pbound}{(d/\log_d n)^{r/2}}

For $p\le \pbound$ we can achieve the bound simply using the universal
algorithm.
Our construction for $p> \pbound$ consists of a partition of a
random set of Hamming balls. 
We denote the radius of these balls by $k$, which will be carefully chosen to  achieve efficient coverage and ensure proper load balancing.  
Intuitively, if $k$ is too small, then each ball has too few edges and we will
need too many balls to cover all the edges in $\G$.  
On the other hand, if $k$ is too large, then we risk having a large
intersection $|S \cap A_i|$ on input $S$. 
We will see that the value of $k$ that achieves the best overhead bounds is
such that $B(d,k) \approx |S|/p=n/p$, but we consider all values of $k$ for
which the algorithm makes sense.

\subsection*{Randomized Edge-Covering Construction using Hamming Balls} 

Let $n$ be the input set size and $k\ge \ceil{r/2}$ to be any integer such that $B(d,k)\le n/p$. 
Observe that by our assumption $p> \pbound$, we have that 
$k\le d/2-\Omega(\sqrt{rd\log d})$.   

The number of Hamming balls in the covering will be proportional to the ratio
of the number of edges in $\G$ and in $\Ball(0^d,k)$.  
Let this ratio $\lk$ be
\[ 
  \lk \deq \frac{|\Er(\bits)|}{|\Er(\Ball(0^d,k))|}.
\]
Define a random distribution $\Ak$ of $p$-tuples of subsets of $\bits$ with
$(A_1,\ldots,A_p) \sim \Ak$ chosen as follows:
Let $x_1,\ldots, x_{Lp}\subseteq \bits$ be a uniformly random sequence of
vectors where $L=\ln(1/(1-\delta))\cdot \ceil{\lk/p}$.  
Define the random sets $A_i$ as
\[
	A_i = \bigcup_{j\ \equiv\ i \mathrm{\ mod \ } p } \Ball(x_i,k).
\]

We begin by bounding $\lk$, the ratio of the number of edges in $\G$
to that in $\Ball(0^d,k)$.  
This bound on $\ell_k$  (and thus $Lp$) combines with
Lemma~\ref{lemma:reducer-size} to immediately provide the claimed upper
bounds in Theorem~\ref{thm} on the reducer size and the number of times
that a vector is replicated.  

\begin{proposition}\label{lk-ub} Fix $k \in [d]$ with $\ceil{r/2}\le k\le d/2-\ceil{r/2}.$
Then,
\[ 
   \frac{|\Er(\bits)|}{|\Er(\Ball(0^d,k))|} <
    \left(\frac{2d}{k}\right)^{\ceil{r/2}}\cdot \frac{2^{d+1}}{B(d,k)}.
\]
\end{proposition}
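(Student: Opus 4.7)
My plan is to upper bound $|\Er(\bits)|$ by a direct vertex count and lower bound $|\Er(\Bk)|$ using Proposition~\ref{ball-ub}. Counting self-loops separately, each of the $2^d$ vertices has one self-loop and $B(d,r)-1$ other neighbors, so $|\Er(\bits)| = 2^{d-1}(B(d,r)+1)$. Proposition~\ref{ball-ub} yields $|\Er(\Bk)| = B(d,k)\cdot\er(\Bk) \ge \tfrac{1}{2}\,B(d,k)\binom{k}{\ceil{r/2}}\,B(d-k,\floor{r/2})$. Forming the ratio reduces the proposition to proving the arithmetic inequality
\[
\frac{B(d,r)+1}{\binom{k}{\ceil{r/2}}\,B(d-k,\floor{r/2})} \;\le\; 2\left(\frac{2d}{k}\right)^{\ceil{r/2}},
\]
where the factor of $2$ on the right arises from the $\tfrac{1}{2}$ in Proposition~\ref{ball-ub} combined with the $2^{d+1}$ in the stated bound versus the $2^{d-1}$ in $|\Er(\bits)|$.

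To establish the arithmetic inequality I would first note that $\ceil{r/2}\le k\le d/2-\ceil{r/2}$ forces $r\le d/2$; a short geometric-series argument on the ratios $\binom{d}{i-1}/\binom{d}{i}$ then gives $B(d,r) \le O(1)\cdot \binom{d}{r}$. Next I would apply the factoring identity $\binom{d}{r}\binom{r}{\ceil{r/2}} = \binom{d}{\ceil{r/2}}\binom{d-\ceil{r/2}}{\floor{r/2}}$, which splits the left-hand side into three independent ratios. Two of them are easy: the hypothesis $k\le d/2-\ceil{r/2}$ forces $d-k\ge d/2$, so each of the $\floor{r/2}$ factors $(d-\ceil{r/2}-i)/(d-k-i)$ is at most $d/(d/2)=2$, whence $\binom{d-\ceil{r/2}}{\floor{r/2}}/B(d-k,\floor{r/2}) \le 2^{\floor{r/2}}$; and the standard central-binomial estimate gives $1/\binom{r}{\ceil{r/2}} \le (r+1)/2^r$. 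Together these two pieces contribute a multiplier of $(r+1)\cdot 2^{-\ceil{r/2}}$.

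The main obstacle is controlling the remaining factor $\binom{d}{\ceil{r/2}}/\binom{k}{\ceil{r/2}}$. In the regime $\ceil{r/2}\le k/2$, the pair of bounds $\binom{d}{\ceil{r/2}}\le d^{\ceil{r/2}}/\ceil{r/2}!$ and $\binom{k}{\ceil{r/2}}\ge (k/2)^{\ceil{r/2}}/\ceil{r/2}!$ yield this ratio $\le (2d/k)^{\ceil{r/2}}$, and combining with the $(r+1)\cdot 2^{-\ceil{r/2}}$ slack closes the argument. When $\ceil{r/2}>k/2$ the binomial $\binom{k}{\ceil{r/2}}$ degenerates and the na\"ive estimate loses a factor of roughly $(e/2)^{\ceil{r/2}}$; here I would instead pair $\binom{d}{\ceil{r/2}} \le (ed/\ceil{r/2})^{\ceil{r/2}}$ against the exponentially small $2^{-r}$ from the central-binomial piece, using $\ceil{r/2}\ge k/2$ in this range to recover the desired form $(2d/k)^{\ceil{r/2}}$. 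The factor-$2$ on the right-hand side absorbs the remaining $O(1)$ losses from bounding $B(d,r)+1$ by a constant multiple of $\binom{d}{r}$.
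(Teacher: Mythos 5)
Your high-level reduction to the arithmetic inequality
\[
\frac{B(d,r)+1}{\binom{k}{\ceil{r/2}}\,B(d-k,\floor{r/2})} \le 2\left(\frac{2d}{k}\right)^{\ceil{r/2}}
\]
is correct, and your route through the factoring identity $\binom{d}{r}\binom{r}{\ceil{r/2}} = \binom{d}{\ceil{r/2}}\binom{d-\ceil{r/2}}{\floor{r/2}}$ is genuinely different from the paper's. The paper does not pass through $\binom{d}{r}$ at all: it bounds $B(d,r)/B(d,\floor{r/2})$ directly using the ball-ratio bound (Proposition~\ref{ballbound} in the appendix), which gives $B(d,r)/B(d,\floor{r/2}) \le (d+1)d^{\ceil{r/2}-1}/r^{(\ceil{r/2})}$, then uses $\binom{k}{\ceil{r/2}} \ge (k/\ceil{r/2})^{\ceil{r/2}}$, and finally bounds $B(d,\floor{r/2})/B(d-k,\floor{r/2}) \le 2^{\floor{r/2}}$ via $d-k-\floor{r/2}\ge d/2$. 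Your factoring route is slicker to state, and your Case~1 ($\ceil{r/2}\le k/2$) essentially works.

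However, there are two concrete problems. First, the minor one: the claim that a ``short geometric-series argument'' gives $B(d,r) \le O(1)\binom{d}{r}$ is false near the top of the allowed range, $r\approx d/2$; the ratio is $\Theta(\sqrt d)$ there (though because the factor $2^{-\ceil{r/2}}$ is available as slack this only affects your constant, not validity). Second, and this is a genuine gap: the Case~2 estimate does not close. Using $\binom{d}{\ceil{r/2}} \le (ed/\ceil{r/2})^{\ceil{r/2}}$, $\binom{k}{\ceil{r/2}}\ge 1$, $\ceil{r/2}\ge k/2$, the $2^{\floor{r/2}}$ bound on the middle ratio, and $(r+1)/2^r$, the surviving multiplier in front of $(2d/k)^{\ceil{r/2}}$ is $\frac{B(d,r)+1}{\binom{d}{r}}\cdot (e/2)^{\ceil{r/2}}(r+1)$, which exceeds $2$ for every $r\ge 2$ and in fact grows with $r$. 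There is no slack left: the $2^{-\ceil{r/2}}$ you would like to ``pair against $(e/2)^{\ceil{r/2}}$'' has already been consumed combining $2^{\floor{r/2}}$ with $2^{-r}$. A concrete failure point is $d=16$, $r=8$, $k=4$ (which is forced into Case~2, since $k \le d/2-\ceil{r/2}=4 < 2\ceil{r/2}=8$): your chain of upper bounds multiplies out to roughly $2.4\times 10^4$, whereas the target $2(2d/k)^{\ceil{r/2}}=8192$; the true left side is only $\approx 49$, so the proposition holds, but your estimate does not establish it. To repair Case~2 you would need a tighter binomial estimate than $(ed/j)^j$ --- e.g., using $\binom{d}{j}\le d^j/j!$ and keeping $j!$ in the game, so that the lost $e^j$ becomes $j^j/j!\approx e^j/\sqrt{2\pi j}$ and you track a $\sqrt{j}$ recovery, plus a sharper $\binom{k}{j}$ lower bound for $j<k<2j$ --- or avoid the split altogether, as the paper does, by working with $B(d,r)/B(d,\floor{r/2})$ via the ball-ratio lemma.
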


\begin{proof}
The definition of $\Er(\bits)$ and the lower bound on $\er(\Ball(0^d,k))$ in
Proposition~\ref{ball-ub} imply that
\begin{equation*}
   \lk = \frac{|\Er(\bits)|}{|\Er(\Ball(0^d,k))|}\le
\frac{B(d,r)2^{d}}{ \binom{k}{\ceil{r/2}} \cdot B(d-k,\floor{r/2}) \cdot B(d,k)}.
\end{equation*}
We invoke Proposition~\ref{ballbound} from the appendix and use the assumption $d/2\ge 2\ceil{r/2}$ to bound $B(d,r)/B(d,\floor{r/2})\le \frac{5}{4}d^{\ceil{r/2}}/r^{\floor{r/2}}$.  Then, combining this with the simple bound $\binom{k}{j}\ge (k/j)^j$ we obtain that the upper bound on $\lk$ is at most 
\begin{equation*} 
	\frac{5}{4}
	\cdot
	\frac{(\ceil{r/2})^{\ceil{r/2}}}{r^{\floor{r/2}}}
	\cdot 
	\frac{B(d,\floor{r/2})}{B(d-k,\floor{r/2})}
	\cdot
	\left(\frac{d}{k}\right)^{\ceil{r/2}}
	\cdot
	\frac{2^d}{B(d,k)}
	\ \ <  \ \ 
	\frac{d^{\floor{r/2}}}{(d-k)^{\floor{r/2}}}
	\cdot
	\left(\frac{d}{k}\right)^{\ceil{r/2}}\cdot\frac{2^{d+1}}{B(d,k)}.
\end{equation*}
We used that $(\ceil{r/2})^{\ceil{r/2}}/r^{\floor{r/2}}$ is at most 1 with $r$ even
and achieves its maximum value of 1.35 for $r$ odd at $r=5$. 
Finally, since $d-k-\floor{r/2}\ge d/2$, this is at most
$(2d/k)^{\ceil{r/2}}\cdot 2^{d+1}/B(d,k)$.
\end{proof}

The upper bound on the overhead in Theorem~\ref{thm} follows from the following two lemmas, and the fact that $B(d,k)\le d^k$, which implies that $B(d,k)$ for 
$k~\le~(1-\gamma)\log_d n~=~\log_d (n/p)$ is at most $n/p$.

\begin{lemma} For all $S \subseteq \bits$ with $|S|=n$, for $k\ge \ceil{r/2}$ and
$B(d,k)\le n/p$, we have
\begin{equation*}
\Ebb[\o(\Ak,S)] \leq \max\{6 \ln(1/(1-\delta)) \cdot  (2d/k)^{\ceil{r/2}}, 9\log_2 p\}.
\end{equation*}
Moreover, the bound on the overhead holds almost surely.
\end{lemma}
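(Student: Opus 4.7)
The plan is to bound $\max_i |A_i \cap S|$ with high probability by reducing to a sum of bounded independent random variables, applying Bernstein's inequality, and then union bounding over $i \in [p]$. For each fixed $i$, overestimate
\[
|A_i \cap S| \leq Z_i := \sum_{j\ \equiv\ i \bmod p} |\Ball(x_j, k) \cap S|,
\]
an $L$-term sum. Since the centers $x_j$ are sampled independently and uniformly from $\bits$, the summands $Z_{i,j} := |\Ball(x_j, k) \cap S|$ are mutually independent, each supported on $[0, B(d,k)]$ with common mean $\mu = nB(d,k)/2^d$; hence the total variance of $Z_i$ is at most $B(d,k)\cdot \E[Z_i]$.

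First I would bound $\E[Z_i] = L\mu$ by combining $L \leq \ln(1/(1-\delta))(\lk/p + 1)$ with the upper bound $\lk \leq (2d/k)^{\ceil{r/2}}\cdot 2^{d+1}/B(d,k)$ from Proposition~\ref{lk-ub} and the hypotheses $B(d,k) \leq n/p$ and $n \leq 2^d$. Routine algebra, absorbing the ``$+1$'' slack using $(2d/k)^{\ceil{r/2}} \geq 2$ (which holds because $k \leq d$), should yield
\[
\E[Z_i] \leq 3\ln(1/(1-\delta)) \cdot (2d/k)^{\ceil{r/2}} \cdot n/p.
\]

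Next I would apply Bernstein's inequality in the form
\[
\Pr\bigl[Z_i \geq \E[Z_i] + t\bigr] \leq \exp\!\bigl(-t^2 / (2B(d,k)(\E[Z_i] + t/3))\bigr),
\]
choosing $t := \max\{\E[Z_i],\ c\,B(d,k)\ln p\}$ for a suitable constant $c$ so that the right-hand side falls below $1/p^2$. A union bound over $i \in [p]$ then gives, with probability at least $1 - 1/p$, that $\max_i |A_i \cap S| \leq 2\E[Z_i] + 2c(n/p)\ln p$, after using $B(d,k) \leq n/p$. Bounding the sum by twice its max and dividing by $n/p$ translates this into the claimed overhead bound $\max\{6\ln(1/(1-\delta))(2d/k)^{\ceil{r/2}},\ 9\log_2 p\}$ for the advertised constants. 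The bound on the expected overhead follows immediately because even on the low-probability bad event the overhead is trivially at most $p$, contributing at most $O(1)$ in expectation.

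The main obstacle I anticipate is calibrating the Bernstein tail, the choice of $t$, and the lossy estimates in the $\E[Z_i]$ computation tightly enough to hit the specific constants $6$ and $9$ rather than merely matching their order of magnitude. This likely requires splitting cleanly into the two regimes -- the mean-dominated regime producing the $6\ln(1/(1-\delta))(2d/k)^{\ceil{r/2}}$ term and the range-dominated regime producing the $9\log_2 p$ term -- rather than applying a single combined tail bound, and exploiting $(2d/k)^{\ceil{r/2}} \geq 2^{\ceil{r/2}}$ whenever additive lower-order terms need to be absorbed into the leading factor.
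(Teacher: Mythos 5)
Your proposal is correct and follows essentially the same route as the paper's proof: decompose $|A_i \cap S|$ into a sum of $L$ independent ball-intersection sizes bounded by $B(d,k)$, compute the mean via Proposition~\ref{lk-ub} and the assumptions $B(d,k)\le n/p$, $n\le 2^d$, apply Bernstein's inequality using the moment bound $\Ebb[Z_{i,j}^2]\le B(d,k)\Ebb[Z_{i,j}]$, and union bound over $i\in[p]$. The only cosmetic difference is that you close the expectation bound by charging $O(1)$ from the bad event (overhead trivially at most $p$, bad-event probability at most $1/p$), whereas the paper parameterizes the tail threshold $\lambda(\theta)$ and leaves the integration implicit; both yield the stated constants with a small amount of slack.
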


\begin{proof} 
Let $S\subseteq \bits$ with $|S|=n$.
We will upper bound $\Ebb[\max_i |S \cap A_i|]$ by proving an upper bound on
$\Ebb[|S \cap A_i|]$ for each $i\in [p]$, where the latter expectation is over the distribution on $A_i$ induced by $\Ak$.  After doing so, we union bound over $[p]$ to guarantee the bound on the expected max with high probability. 

Fix $i \in [p]$.  The set $A_i$ is the union of
$\Ball(x_j,k)$ for $L$ vectors $x_j$ chosen uniformly and independently from
$\bits$.
Define the random variable $Z_j~=~|S~\cap~\Ball(x_j,k)|$ for $j \in [L]$.
Then, defining $Z=\sum_j Z_j$, we have
$
	|S\cap A_i| \le \sum_{j} Z_j = Z.
$
We first upper bound $\Ebb[Z] \ge \Ebb[|S \cap A_i|]$.  
Observe that for a uniformly random $x_j$ from $\bits$, every element
of $\bits$ is equally likely to be in $\Ball(x_j,k)$.   Therefore every $j \in [L]$,
\begin{equation*}
	\Ebb[Z_j] = \Ebb [|S \cap \Ball(x_j,k)|]
	= \sum_{y\in S} \Pr[y\in \Ball(x_j,k)] =\sum_{y\in S} \frac{B(d,k)}{2^d}
	= \frac{n\cdot B(d,k)}{2^d}.
\end{equation*}
Thus, recalling $L=\ln(1/(1-\delta))\cdot \ceil{\lk/p}$, and using the upper bound on $\lk$ (the ratio of the number of edges in  
$\Ball(0^d,k)$ and in $\G$)  from Proposition~\ref{lk-ub}, we compute
\begin{equation*}
	\Ebb[Z]
	\ =\  L\cdot \frac{n\cdot B(d,k)}{2^d}
	\ =\  \ln(1/(1-\delta))\cdot\ceil{\lk/p} \cdot \frac{n\cdot B(d,k)}{2^d}
	\ <\ 3\ln(1/(1-\delta))\cdot (2d/k)^{\ceil{r/2}}\cdot \frac{n}{p}.
\end{equation*}
For concentration, we use Bernstein's inequality (e.g., Thm. 3.6 in~\cite{chunglu}), which applies since $0 \leq Z_j \leq B(d,k)$ and $\Ebb[Z_j^2]\le B(d,k)\Ebb[Z_j]$.  Therefore,  for any $\lambda\ge \Ebb[Z]$,
\begin{equation*}
	\Pr\left[Z > 2 \lambda \right]
	\le e^{-\frac{1}{2}\lambda^2/(B(d,k)\Ebb[Z]+\lambda B(d,k)/3)}
	\le e^{-\frac{3}{8}\lambda/B(d,k)}.
\end{equation*}
For every $\lambda=\lambda(\theta)\ge \max\{\Ebb[Z],4 B(d,k)(\log_2 p+\theta)\}$
the probability that $|S\cap A_i|\ge 2\lambda$ is at most $e^{-\theta}/p^{3/2}$. 
By the union bound over $[p]$, the probability that $\max_i~|S~\cap~A_i|$ exceeds $2\lambda$ is at most
$e^{-\theta}/p^{1/2}$.   In particular, we have an upper bound on $\o(\Ak,S)$
of $2\lambda$. 
\end{proof}
We conclude with the correctness argument.
\begin{lemma}
For any $n$, $\Ak$ forms a $(p,\delta)$-edge-covering of the
$n$-subsets of $\bits$. Further, if $\delta$ is chosen to be $1-1/n^3$,
then $(A_1,\ldots,A_p)\sim\Ak$ covers all edges of any fixed input set of size
$n$ with probability
at least $1-1/n$.
\end{lemma}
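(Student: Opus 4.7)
The plan is to show that each edge of $\Er(S)$ is covered with probability at least $\delta$, and then invoke linearity of expectation (for the first claim) and a union bound (for the high-probability claim).

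First, I would fix any edge $e=(u,v)\in\Er(S)$ with $t=\dist(u,v)\le r$.  By vertex-transitivity of $\bits$, the probability that a uniformly random $x\in\bits$ satisfies $u,v\in\Ball(x,k)$ equals $p_t:=|\Ball(u,k)\cap\Ball(v,k)|/2^d$ and depends only on $t$.  Since the intersection of two radius-$k$ balls shrinks as their centers move apart, $p_t$ is non-increasing in $t$, so $p_t\ge p_r$ for every $t\le r$.  Independence of the centers $x_1,\ldots,x_{Lp}$ then yields $\Pr[e\text{ uncovered}]=(1-p_t)^{Lp}\le (1-p_r)^{Lp}\le \exp(-Lp\cdot p_r)$.

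Next, I would verify that $Lp\cdot p_r\ge \ln(1/(1-\delta))$.  Substituting $L=\ln(1/(1-\delta))\cdot\lceil\ell_k/p\rceil$ gives $Lp\ge \ln(1/(1-\delta))\cdot\ell_k$, so it suffices to show the purely geometric inequality $\ell_k\cdot p_r\ge 1$.  The plan for this step is a double-counting argument: by vertex-transitivity of $\Gdr$, summing the indicator $\mathbf{1}[e'\subseteq\Ball(x,k)]$ over all pairs $(e',x)$ with $e'\in\Edr(\bits)$ yields $p_r=2|\Edr(\Ball(0^d,k))|/(2^d\binom{d}{r})$.  One then combines this exact formula with the edge-density bound of Proposition~\ref{ball-ub} (suitably specialized to $\Edr$) to conclude.

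Once per-edge coverage $\ge\delta$ is established, linearity of expectation immediately yields $\E[|\bigcup_i \Er(A_i\cap S)|] = \sum_{e\in\Er(S)} \Pr[e\text{ covered}] \ge \delta\cdot|\Er(S)|$, which is the $(p,\delta)$-edge-covering condition.  For the high-probability claim, setting $\delta=1-1/n^3$ makes each edge's failure probability at most $1/n^3$.  Since $|\Er(S)|\le \binom{n}{2}+n\le n^2$ (counting self-loops), a union bound shows that all edges of $\Er(S)$ are covered simultaneously with probability at least $1-n^2/n^3=1-1/n$.

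The main obstacle is the geometric inequality $\ell_k\cdot p_r\ge 1$.  A naive bound using only Proposition~\ref{lk-ub} (which upper bounds $\ell_k$) or a midpoint-ball argument for $p_r$ in isolation does not suffice; the tight step requires combining the precise double-counting formula for $p_r$ with a careful comparison of $\edr$- and $\er$-densities in Hamming balls versus the whole cube.  I expect this step to be the most delicate part of the proof.
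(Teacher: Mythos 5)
Your plan shares the paper's skeleton: bound the probability that a fixed pair of $\Er(S)$ is missed by the random balls, use linearity of expectation for the $(p,\delta)$-covering claim, and a union bound over the at most $n^2$ pairs for the error-free claim. The divergence is in the treatment of the per-pair probability. The paper asserts, citing edge-transitivity of $\G$, that \emph{every} pair $\{u,v\}$ with $\dist(u,v)\le r$ lands in a random $\Ball(x_j,k)$ with probability exactly $1/\lk$, then immediately gets $(1-1/\lk)^{Lp}\le 1-\delta$ from $L=\ln(1/(1-\delta))\lceil\lk/p\rceil$. You more carefully note that the probability is $p_t=|\Ball(u,k)\cap\Ball(v,k)|/2^d$ with $t=\dist(u,v)$, that $p_t\ge p_r$ for $t\le r$, and plan to close the argument by proving the geometric inequality $\lk\cdot p_r\ge 1$.

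That inequality is false, and this is a genuine gap in your plan. Double-counting incidences between pairs and ball centers gives $|\Er(\Ball(0^d,k))|=\sum_{e\in\edges(\G)}p_{\dist(e)}$, so $1/\lk$ is exactly the \emph{average} of $p_{\dist(e)}$ over all edges of $\G$ (including the $2^d$ self-loops, each contributing $p_0=B(d,k)/2^d$). Since $p_t$ is non-increasing in $t$, the smallest summand is $p_r$, and therefore $1/\lk\ge p_r$, i.e.\ $\lk\cdot p_r\le 1$ --- the opposite of what you need. (Concretely, $d=4$, $r=2$, $k=2$ gives $\lk=96/51$ and $p_r=1/2$, so $\lk p_r\approx 0.94<1$.) So the step you flag as ``the most delicate part'' cannot be carried out as stated; to make your route work you would have to calibrate $L$ against the minimum per-edge probability $p_r$ rather than the average $1/\lk$ (adjusting the constants in Theorem~\ref{thm}(a)) or find a different bridge. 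It is also worth noting that the subtlety you ran into is real: for $r>1$, $\G$ is \emph{not} edge-transitive, and the paper's blanket ``with probability $1/\lk$'' is only literally exact when $r=1$ (or for $\Gdr$ in place of $\G$). Your plan exposes this, but resolves the resulting inequality in the wrong direction.
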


\begin{proof}
Fix $S\in \bits$ with $|S|=n$.
Since $\G$ is edge-transitive, for $x_j\in \bits$ chosen uniformly at 
random, each pair $\{u,v\}\in S$ of distance at most $r$ satisfies
$\set{u,v} \subseteq \Ball(x_j,k)$ with probability 
\[
	1/\lk=|\Er(\Ball(0^d,k))|/|\Er(\bits)|.
\]
Therefore, the probability that a fixed $\set{u,v}$ is not covered by 
$(A_1,\ldots,A_p)\in\Ak$ is at most
\begin{equation*}
(1-1/\lk)^{Lp}\le (1-1/\lk)^{\ln(1/(1-\delta))\lk} \le e^{-\ln(1/(1-\delta))}=1-\delta.
\end{equation*}
Each edge of $\Er(S)$ is covered with probability at least $\delta$,
and in expectation $(A_1,\ldots,A_p)\sim\Ak$ covers at least a $\delta$
fraction of $\Er(S)$.

There are at most $n^2$ pairs in $S$ so with $\delta=1-1/n^3$ (in which
case $\Ak$ chooses $3p\ceil{\lk/p}\ln n$ uniformly random balls of radius $k$)
a union bound implies that
the probability that $(A_1,\ldots,A_p)\sim\Ak$ covers all pairs in $S$ is at
least $1-1/n$.  
\end{proof}

\section{The Lower Bound}\label{sec:lb}

We now prove the lower bound of Theorem~\ref{mainlb}(b).
Let $\A$ be any randomized $(p,\delta)$-edge cover of the $n$-subsets of
$\bits$ for $\G$.

We will use a variant of Yao's minimax principle by exhibiting a ``hard''
distribution on subsets $S$ of $\bits$ of size $n$.  The usual form of the
argument would then be to show that the distribution has the
property that the expected overhead of any  fixed 
$(A_1,\ldots,A_p)$ in the support of $\A$ 
is large and conclude that the expected overhead for $\A$ on
some element of the support of the hard distribution is large.

However, this does not quite work.
Since $\A$ produces a large fraction of the
similar pairs of $S$
\emph{only in expectation}, its
support may contain $(A_1,\ldots,A_p)$ that achieve low overhead but without
producing many similar pairs. 
Instead, we show that (1) the expected overhead 
under the hard distribution must be large
for any tuple $(A_1,\ldots,A_p)$ such that $|\bigcup_{i\in [p]} \Er(A_i)|$ is 
at least a $\delta/2$ fraction of $|\edges(\G)|$
and that (2) any $(A_1,\ldots,A_p)$ that fails to have this property does badly at covering subsets $S$ chosen according
to this distribution and hence such tuples must only be a small fraction of the
probability mass of $\A$.

The hard input distribution $\Dn$ over sets $S~\subseteq~\bits$ of size $n$ will
choose $S$ to be a random sub-sampled Hamming ball of suitable size so that $S$ has relatively high density within that ball. 

\begin{definition}
\label{defn-Dn}\label{Dn} 
Let $d \in \Z^+$ and $r \in [d]$ be positive integers.    
Given $n \in [2^{d-1}]$, let $R = R(n,d,r) \in \Z^+$ denote the unique positive integer such that $r$~divides~$R$ 
and  \[B(d,R-r) < n \leq B(d,R).\] 
Define distribution $\Dn$ on subsets $S$ of $\bits$ of size $n$ by first
choosing $x \in \bits$ uniformly at random, then choosing a random 
subset of $n$ elements of $\Ball(x,R)$.
\end{definition}

Using the fact that $\Dn$ samples edges uniformly from $\G$ we have the 
following simple property of potential covers.

\begin{proposition}
\label{mostlybig}
If $(A_1,\ldots,A_p)$ satisfies 
$\displaystyle |\bigcup_{i\in [p]} \Er(A_i)|\le \alpha\cdot |\edges(\G)|
$
for some $\a \in [0,1]$, then 
\begin{equation*}
\E_{S\sim \Dn}\left|\bigcup_{i\in [p]} \Er(A_i\cap S)\right| \ \le \ \alpha\cdot |\Er(S)|.
\end{equation*}
\end{proposition}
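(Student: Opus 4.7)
My plan is to prove the inequality by reducing both sides to a common per-edge probability, exploiting the translation-invariance of $\Dn$. First I would observe the set-theoretic identity
\[
\bigcup_{i \in [p]} \Er(A_i \cap S) \;=\; T \cap \Er(S), \qquad T \,:=\, \bigcup_{i \in [p]} \Er(A_i),
\]
since an edge $\{u,v\}$ lies in the left-hand side iff $u, v \in S$ and some $A_i$ contains both $u$ and $v$. By linearity of expectation,
\[
\ES\!\left[\,\bigl|T \cap \Er(S)\bigr|\,\right] \;=\; \sum_{e \in T} \Pr_{S \sim \Dn}[\,e \in \Er(S)\,],
\]
and likewise $\ES[|\Er(S)|] = \sum_{e \in \edges(\G)} \Pr_{S \sim \Dn}[e \in \Er(S)]$.

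The key step is to show that $\Pr_{S \sim \Dn}[e \in \Er(S)]$ takes a single value $q$ for every $e \in \edges(\G)$. Because $\Dn$ picks the center $x \in \bits$ uniformly and then samples $S$ uniformly from the $n$-subsets of $\Ball(x, R)$, I can rewrite $S = x \oplus S_0$ where $x$ is independent of a uniformly chosen $n$-subset $S_0 \subseteq \Ball(0^d, R)$. Fixing $e = \{u,v\}$, the event $e \subseteq S$ is equivalent to $\{u \oplus x, v \oplus x\} \subseteq S_0$; since $x$ is uniform, $\{u \oplus x, v \oplus x\}$ is a uniformly random translate of $\{u,v\}$ in $\bits$. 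By the edge-transitivity of $\G$---whose automorphism group, generated by translations and coordinate permutations, acts transitively on its edges---the resulting probability depends only on the orbit of $e$ and so equals the same $q$ for every edge.

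Combining these, the left-hand side of the target inequality is $q \cdot |T|$ while $\ES[|\Er(S)|] = q \cdot |\edges(\G)|$, so the hypothesis $|T| \le \alpha \cdot |\edges(\G)|$ immediately gives
\[
\ES\!\left[\,\bigl|T \cap \Er(S)\bigr|\,\right] \;=\; q \cdot |T| \;\le\; \alpha \, q \, |\edges(\G)| \;=\; \alpha \cdot \ES[|\Er(S)|],
\]
matching the claimed bound (with $|\Er(S)|$ on the right-hand side understood as $\ES[|\Er(S)|]$). The step that needs the most care is the constant-$q$ claim: translations alone only give transitivity within an orbit indexed by $u \oplus v$, so one must also appeal to the coordinate-permutation symmetries of $\G$ to collapse these orbits into a single one; the remaining ingredient---that the ball-intersection count $|\Ball(u,R) \cap \Ball(v,R)|$ depends only on $\dist(u,v)$---then completes the argument.
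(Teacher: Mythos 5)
Your overall structure mirrors the paper's one-line proof: the same set identity $\bigcup_i \Er(A_i\cap S) = \big(\bigcup_i\Er(A_i)\big)\cap\Er(S)$, the same linearity step, and the same reduction to the claim that each edge of $\G$ lies in $\Er(S)$ with one common probability $q$. The gap is in how you justify that constant-$q$ claim, and your own elaboration brings it into focus. The group generated by translations and coordinate permutations preserves \emph{exact} Hamming distance, so its orbits on $\edges(\G)$ are indexed by $\dist(u,v)\in\{0,1,\ldots,r\}$, not collapsed into a single orbit. For $r>1$ this subgroup therefore does not act transitively on $\edges(\G)$. (The full automorphism group of $\G$ may be larger and might even be edge-transitive, but only the translation/permutation part leaves $\Dn$ invariant, and that is all your averaging step actually uses.)

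Across these orbits the per-edge probability genuinely differs. For $\dist(u,v)=k\ge 1$ it is proportional to $I(k)=|\Ball(u,R)\cap\Ball(v,R)|$, and one has $I(0)=B(d,R)$, $I(1)=I(2)=2B(d-1,R-1)$ (so $I(0)-I(1)=\binom{d-1}{R}>0$), and $I(1)-I(3)=2\binom{d-3}{R-1}>0$ in the relevant range $R\le d/2$; self-loops carry yet another value, $n/2^d$. So there is no single $q$. Your closing sentence, that the $\dist(u,v)$-dependence of the ball-intersection count ``completes the argument,'' actually points the opposite way: it is precisely that dependence which blocks the reduction, since the distance class is not fixed over $\edges(\G)$. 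To be fair, the paper's own proof asserts ``each edge of $\G$ with equal probability'' with no justification at all, so you are reproducing its argument rather than diverging from it, and your fleshed-out version simply makes the issue visible; a watertight argument would need either to restrict to a single distance class (as in the $\Gdr$ / $\eqohbound$ variant, where all edges are at distance exactly $r$) or to control how the set $\bigcup_i\Er(A_i)$ is distributed across the distance classes.
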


\begin{proof}
Observe that
$
	\bigcup_{i\in [p]} \Er(A_i\cap S)=\left(\bigcup_{i\in [p]}\Er(A_i)\right)\cap \Er(S).
$
For $S$ chosen according to $\Dn$, the set $\Er(S)$ contains each edge of
$\G$ with equal probability, so the claim follows by linearity of expectation.  
\end{proof}

Observing that $\ohbound$ is precisely 
\[
	\ohbound = \min_{\A}\ \max_{S\subseteq\bits:\ |S|=n} \ \o(\A,S),
\]
Yao's minimax principle~\cite{yao} applied to distribution $\Dn$ yields the following.

\begin{lemma}
\label{yao}
Let $\delta\in [0,1]$ and $A^{good}_\delta$ be the set of tuples
$(A_1,\ldots, A_p)$ of subsets of $\bits$
such that 
$|\bigcup_{i\in [p]} \Er(A_i)|\ge \frac{\delta}{2}\cdot |\edges(\G)|$.  Then,
\begin{equation*}
\ohbound \ge\ \frac{\delta}{2}\cdot \max_{(A_1,\ldots,A_p)\in A^{good}_\delta} \ \ES \left[ \max_{i} |A_i \cap S|\cdot \frac{p}{n}\right].
\end{equation*}
\end{lemma}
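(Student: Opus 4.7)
The plan is to apply Yao's minimax principle with respect to the hard distribution $\Dn$, and then use Proposition~\ref{mostlybig} to argue that any valid randomized $(p,\delta)$-edge covering $\A$ must concentrate at least $\delta/2$ of its probability mass on tuples in $A^{good}_\delta$. This converts a lower bound on expected overhead under $\Dn$ into a statement about good tuples.

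First I will invoke Yao in its standard form: since $\o_n(\A)=\max_S \o(\A,S)\ge \ES[\o(\A,S)]$ for any valid $(p,\delta)$-edge covering, $\ohbound \ge \min_\A \ES[\o(\A,S)]$. Fixing such an $\A$ and swapping the two expectations, I write $\ES[\o(\A,S)]=\Ebb_{(A_1,\ldots,A_p)\sim\A}\bigl[\ES[\max_i|A_i\cap S|\cdot p/n]\bigr]$.

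Next I lower-bound $\Pr_\A[A^{good}_\delta]$. By the uniform-edge-sampling property of $\Dn$ established in the proof of Proposition~\ref{mostlybig}, for any fixed tuple $(A_1,\ldots,A_p)$ one has $\ES[|\bigcup_i\Er(A_i\cap S)|]=(|\bigcup_i\Er(A_i)|/|\edges(\G)|)\cdot \ES[|\Er(S)|]$. Averaging over $\A$ and invoking the $(p,\delta)$-edge-covering inequality in its $\Dn$-averaged form gives $\Ebb_\A[|\bigcup_i\Er(A_i)|/|\edges(\G)|]\ge \delta$. Since this ratio lies in $[0,1]$ and is below $\delta/2$ off $A^{good}_\delta$, a Markov-type argument yields $\Pr_\A[(A_1,\ldots,A_p)\in A^{good}_\delta]\ge \delta/2$.

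Finally I restrict the outer expectation to the good event: $\ES[\o(\A,S)]\ge \Pr_\A[A^{good}_\delta]\cdot \Ebb_\A\bigl[\ES[\max_i|A_i\cap S|\cdot p/n]\bigm| A^{good}_\delta\bigr] \ge (\delta/2)\cdot \ES[\max_i|A_i\cap S|\cdot p/n]$ for a suitable good tuple selected by the conditional expectation. Taking the min over valid $\A$ (and, where applicable, the max over $A^{good}_\delta$) yields the claim. The main obstacle is this last step: one must either pick out a specific tuple in $A^{good}_\delta$ against which the conditional expectation is lower-bounded, or justify directly that the bound can be maximized over the entire set $A^{good}_\delta$. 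In practice this is resolved when the lemma is applied, by exhibiting a concrete good tuple whose $\Dn$-expected overhead can be estimated via the distance-correlation machinery developed subsequently.
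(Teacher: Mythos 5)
Your approach is essentially the paper's: apply Yao's minimax principle to the hard distribution $\Dn$, use Proposition~\ref{mostlybig} together with the edge-covering guarantee and a Markov argument to show that $\A$ places mass at least $\delta/2$ on $A^{good}_\delta$, then restrict the expectation over $\A$ to the good event. Each of these steps in your write-up is correct, and the Markov calculation (using that the density ratio lies in $[0,1]$) is exactly the paper's.

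The one place you express unease, the ``main obstacle'' of reconciling your bound with the stated $\max$, is in fact a typo in the lemma: the quantifier should be $\min$, not $\max$. Your derivation cleanly gives
\[
\ohbound \ \ge \ \frac{\delta}{2}\cdot \min_{(A_1,\ldots,A_p)\in A^{good}_\delta}\ \ES\Bigl[\ \max_{i}\ |A_i\cap S|\cdot \tfrac{p}{n}\ \Bigr],
\]
since the conditional expectation on the good event is at least the infimum over that event. The $\max$ version is actually false (e.g., the tuple with every $A_i=\bits$ is in $A^{good}_\delta$ and would force $\ohbound\ge \delta p/2$, contradicting the $O(\sqrt{p})$ universal algorithm). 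Your proposed reconciliation, however, points the wrong way: the lemma is not applied by exhibiting a single favorable good tuple, but by showing in Lemma~\ref{prop:g(t)-lower-bound} and Lemma~\ref{claim:t} that \emph{every} tuple in $A^{good}_\delta$ has large expected $\Dn$-overhead, which is exactly what the $\min$ form requires. This is also why $g$ is defined as a minimum over good tuples.
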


\begin{proof}
Let $\A$ be a randomized $(p,\delta)$-edge cover for the $n$ subsets of
$\G$.
Therefore 
the expectation over $(A_1,\ldots,A_p)\sim \A$ and $S\sim \Dn$
of $|\bigcup_{i\in [p]} \Er(A_i\cap S)|$ must be at least
$\delta |\Er(S)|$, since
the $\delta |\Er(S)|$ bound holds for every choice of $S$.
By Proposition~\ref{mostlybig} for $(A_1,\ldots,A_p)\notin A^{good}_\delta$
we have
$\E_{S\sim \Dn} |\bigcup_{i\in [p]} \Er(A_i\cap S)|\le \delta |\Er(S)|/2$.
Since the most this quantity can be is $|\Er(S)|$, by Markov's inequality, the
probability that $\A$ produces
$(A_1,\ldots,A_p)$ that is in $A^{good}_\delta$ is at least $\delta/2$ and
since all quantities are non-negative, the lower bound follows.
\end{proof}

For the remainder of this section we fix $(A_1,\ldots,A_p)$ in
$A^{good}_\delta$, so that 
\begin{equation}
\label{min-edge-density}
|\bigcup_{i\in [p]} \Er(A_i)|\ge \frac{\delta}{2}|\edges(\G)|=\delta B(d,r)2^{d-2}.\ \
\end{equation}
We prove that $\ES \ \max_{i} |A_i \cap S|$ must be large. Note that for sufficiently small $r$, and sufficiently large
$\delta$, equation (\ref{min-edge-density}) implies a lower bound
on the total number of edges of $\Gdr$ covered by the $A_i$.

\begin{proposition}
\label{exact-edge-density}
Suppose that $r\le \sqrt{d/2}$ and $\delta\ge 4/\sqrt{d}$.
If $(A_1,\ldots,A_p)\in A^{good}_\delta$, then 
\begin{equation*}
|\bigcup_{i\in [p]} \Edr(A_i)|\ge \frac{\delta}{4}\cdot \binom{d}{r}\cdot 2^{d-1}
\end{equation*}
\end{proposition}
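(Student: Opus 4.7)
The plan is to convert the lower bound on covered edges at distance at most $r$ into one on edges at exactly distance $r$, by subtracting off the contribution from distances strictly less than $r$. Since the edges of $\G$ partition by exact Hamming distance,
\[
    \Bigl|\bigcup_{i\in[p]} \Er(A_i)\Bigr| \;=\; \Bigl|\bigcup_{i\in[p]} \Edr(A_i)\Bigr| \;+\; \sum_{j=0}^{r-1} \Bigl|\bigcup_{i\in[p]} \mathsf{E}_{j}(A_i)\Bigr|,
\]
where $\mathsf{E}_{j}$ denotes the edges of distance exactly $j$. The second sum is at most the total number of edges in $\bits$ at distance less than $r$, which is $B(d,r-1)\cdot 2^{d-1}$ under the same convention that gives $|\edges(\G)|=B(d,r)2^{d-1}$ (the additive correction for self-loops is negligible compared to the quantities involved).

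Invoking the hypothesis $(A_1,\ldots,A_p)\in A^{good}_\delta$ from equation (\ref{min-edge-density}) and expanding $B(d,r)=\binom{d}{r}+B(d,r-1)$ yields
\[
    \Bigl|\bigcup_{i} \Edr(A_i)\Bigr| \;\ge\; \frac{\delta}{2}B(d,r)2^{d-1} - B(d,r-1)2^{d-1} \;=\; \frac{\delta}{2}\binom{d}{r}2^{d-1} - \Bigl(1-\frac{\delta}{2}\Bigr)B(d,r-1)2^{d-1}.
\]
To reach the target $\tfrac{\delta}{4}\binom{d}{r}2^{d-1}$ it suffices to show $(1-\delta/2)B(d,r-1) \le \tfrac{\delta}{4}\binom{d}{r}$, and since $1-\delta/2\le 1$, it is enough to prove $B(d,r-1) \le \tfrac{\delta}{4}\binom{d}{r}$.

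The final estimate uses both hypotheses $r\le\sqrt{d/2}$ and $\delta\ge 4/\sqrt{d}$. Under $r\le \sqrt{d/2}$ (and $d$ not too small), every consecutive binomial ratio for $1\le j\le r$ satisfies
\[
    \binom{d}{j-1}\Big/\binom{d}{j} \;=\; \frac{j}{d-j+1} \;\le\; \frac{r}{d-r+1} \;\le\; \frac{1}{\sqrt{2d}-1},
\]
so $B(d,r-1)/\binom{d}{r}$ is dominated by a convergent geometric series whose sum one checks is at most $1/\sqrt{d}$. Combined with the inequality $\delta/4\ge 1/\sqrt{d}$ supplied by the hypothesis on $\delta$, this gives $B(d,r-1)\le \tfrac{\delta}{4}\binom{d}{r}$, completing the proof.

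The main obstacle is the tight constant-tracking in this last step: the two hypotheses are calibrated so that the geometric ratio $q=r/(d-r+1)$ is approximately $1/\sqrt{2d}$, making $q/(1-q)$ just small enough to match $\delta/4$. A coarser estimate of $q$ (e.g., $q\le \sqrt{2/d}$) would lose a constant factor and require a correspondingly stronger lower bound on $\delta$. The self-loop correction to $|\edges(\G)|$ and to the total count of distance-$<r$ edges is additive of order $2^{d-1}$, which is absorbed without affecting the final constant since $\binom{d}{r}\ge d$.
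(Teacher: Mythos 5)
Your proof is correct and takes essentially the same route as the paper's: both subtract the $B(d,r-1)\cdot 2^{d-1}$ edges of $\G$ not in $\Gdr$ from the covered-edge count guaranteed by equation (\ref{min-edge-density}), and both reduce the problem to showing $B(d,r-1)$ is small relative to $\binom{d}{r}$ via the binomial ratio $r/(d-r+1)$ together with the two hypotheses.

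The one place to be careful is your intermediate estimate $q \le 1/(\sqrt{2d}-1)$: this gives $q/(1-q) \le 1/(\sqrt{2d}-2)$, which is only $\le 1/\sqrt{d}$ for $d\ge 24$, whereas $\delta\le 1$ and $\delta\ge 4/\sqrt{d}$ together only force $d\ge 16$. You flag this constant-tracking concern yourself, and indeed the unrounded form $q\le \sqrt{d/2}/(d-\sqrt{d/2}+1)$ does give $q/(1-q)\le 1/\sqrt{d}$ down to $d=16$. The paper sidesteps the arithmetic by directly verifying $q < 1/(\sqrt{d}+1)$, which yields $q/(1-q) < 1/\sqrt{d}$ immediately; the paper then also works with $B(d,r-1)\le \tfrac{\delta}{4}B(d,r)$ rather than your expansion $B(d,r)=\binom{d}{r}+B(d,r-1)$, a purely cosmetic difference.
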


\begin{proof}
There are $B(d,r-1) 2^{d-1}$ edges of $\G$ that are not in $\Gdr$.
For $r\le \sqrt{d/2}$, we see that
\[ \frac{\binom{d}{r-1}}{\binom{d}{r}}=\frac{r}{d-r+1}<\frac{1}{\sqrt{d}+1}\] and hence
$B(d,r-1)\le B(d,r)/\sqrt{d}\le \frac{\delta}{4}\cdot B(d,r) $.
Combining with (\ref{min-edge-density}), we have that at
least
$\frac{\delta}{4} B(d,r) 2^{d-1}\ge \frac{\delta}{4}\binom{d}{r}2^{d-1}$ edges
of $\Gdr$ are contained in 
$\bigcup_{i\in [p]} \Edr(A_i)$.
\end{proof}

Next, we reduce the problem to reasoning about covers that only include necessary elements. Intuitively, it suffices to cover each vertex with fewer sets than its degree. 

\begin{definition}
A tuple $(A_1,\ldots,A_p)$ is $r$-\emph{pruned} if for every
$x\in \bits$, $|\set{i:\ x\in A_i}|\le B(d,r)-1$.
\end{definition}

\begin{lemma}
\label{pruninglemma}
For any $(A_1,\ldots,A_p)$ there exists a $r$-pruned
$(A'_1,\ldots, A'_p)$ such that $A'_i\subseteq A_i$ for all $i\in [p]$ and
$\bigcup_{i\in [p]} \Er(A'_i)=\bigcup_{i\in [p]} \Er(A_i)$.
\end{lemma}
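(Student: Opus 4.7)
The plan is to build $(A'_1,\ldots,A'_p)$ by assigning each covered edge a single ``witness'' index and letting $A'_i$ consist of exactly the vertices that witness an edge via index $i$. Concretely, for each non-self-loop edge $e=\{u,v\}\in\bigcup_i \Er(A_i)$, I pick one $\phi(e)\in[p]$ with $u,v\in A_{\phi(e)}$, which exists by hypothesis. For each vertex $x$ with $x\in A_j$ for some $j$ but no non-self-loop edge incident to $x$ in $\bigcup_i \Er(A_i)$, I pick a single $\phi(\{x,x\})\in[p]$ with $x\in A_{\phi(\{x,x\})}$. Then I set
\begin{equation*}
A'_i \;=\; \bigl\{x\in\bits \;:\; \phi(e)=i \text{ for some edge } e\ni x\bigr\}.
\end{equation*}

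By construction $A'_i\subseteq A_i$, so $\bigcup_i\Er(A'_i)\subseteq\bigcup_i\Er(A_i)$. For the reverse inclusion, any non-self-loop $e=\{u,v\}\in\bigcup_i\Er(A_i)$ satisfies $u,v\in A'_{\phi(e)}$ and hence $e\in\Er(A'_{\phi(e)})$; the self-loop at any vertex $x$ incident to such an edge is then automatically covered since $x\in A'_{\phi(\{x,y\})}$ for any such incident $\{x,y\}$, and the special assignment above handles the remaining ``isolated-but-nonempty'' vertices.

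To verify the $r$-pruning bound, fix $x\in\bits$ and set $T'(x)=\{i:x\in A'_i\}$. By construction every $i\in T'(x)$ equals $\phi(e)$ for some edge $e\ni x$ with an assigned witness. If $x$ is incident to at least one non-self-loop edge in $\bigcup_i\Er(A_i)$, then no witness is ever assigned to $\{x,x\}$, so $|T'(x)|$ is bounded by the number of non-self-loop neighbors of $x$ in $\G$, which is $B(d,r)-1$. Otherwise only $\phi(\{x,x\})$ can contribute, giving $|T'(x)|\le 1\le B(d,r)-1$. The entire argument is elementary; the only real subtlety is this separate bookkeeping for self-loops, which is precisely what shaves the count from $B(d,r)$ to $B(d,r)-1$ by letting the self-loop ``ride along'' on the witness of some incident genuine edge whenever one exists.
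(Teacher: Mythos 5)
Your proof is correct and takes essentially the same approach as the paper: assign each covered edge a single witness index and retain a vertex only in the sets that witness some edge incident to it, bounding the count by the number of distinct non-self-loop neighbors, which is $B(d,r)-1$. The paper realizes the witness assignment by always taking the least index (the ``pivotal'' set), whereas you allow an arbitrary choice and explicitly account for self-loops at vertices with no incident non-loop edge in any $A_i$ — a small case the paper's proof silently glosses over but that your bookkeeping handles cleanly.
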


\begin{proof}
Define $\w(x) = |\{ i \in [p]\ :\  x \in A_i\}|$.
For $i\in [p]$, say that a set $A_i$ is {\em pivotal with respect to $x$} if
there exists an edge $\{x,y\}$ in $\G$ with $x \neq y$ such 
that $\{x,y\} \subseteq A_i$ while $\{x,y\} \subsetneq A_j$  for all $j < i$. 
For each $x\in \bits$, remove $x$ from all but the pivotal sets for $x$. 
Let $(A'_1,\ldots,A'_p)$ be the resulting tuple of sets.
Clearly, $\bigcup_i \Er(A'_i)=\bigcup_i \Er(A_i)$ by construction.
For each $x$, the number of pivotal sets for is at most the size of the open
neighborhood of $x$ in $\G$, which is $B(d,r) - 1$.   Therefore, 
$w(x)\le B(d,r)-1$ for all $x$, and hence, $(A'_1,\ldots,A'_p)$ is
$r$-pruned.
\end{proof}

With these basic preliminaries out of the way we describe the overall proof
strategy.

\subsubsection*{Proof Strategy} 

Lemma~\ref{pruninglemma} allows us to assume without loss of generality that
$(A_1,\ldots,A_p)$ is $r$-pruned since pruning yields the same set of
similar pairs covered and does not increase overhead.

Our argument capitalizes on the following tradeoff. We have two cases:
For the first case, if $\sum_{i\in [p]} |A_i|$ is much larger than $2^d$ then
the algorithm will replicate vertices many times, implying a large overhead. 
For the second case, if $\sum_{i\in [p]} |A_i|$ is closer to $2^d$,
then we will prove that 
the average edge density of induced edges of $\G$ in the sets $A_i$ must be
large. In other words, in the second case, the weighted average of $\er(A_i)$
is large.  By Proposition~\ref{exact-edge-density}, this means that the
weighted average of $\edr(A_i)$ is also large.

Using a key combinatorial lemma which shows that sets $A$ with
large $\edr(A)$ must also have large $\eR(A)$ where $R$ is
the distance defined by distribution $\Dn$ and hence, in the second case,
the sum of the $|\ER(A_i)|$ is large.

Finally, it is not hard to see that $|\ER(A_i)|$ being large is equivalent to
saying that the expected size of $|A_i\cap \Ball(x,R)|$ is large for $x$ a
randomly chosen element of $A_i$ and it is not hard to see that a similar
property also applies to high density random subsets $S$ of $\Ball(x,R)$;
like elements chosen according to $\Dn$.

This roughly suggests that a set $S$ chosen from $\Dn$ will end up producing a
large overhead at some set $A_i$ that contains 
the center $x$ of the ball $\Ball(x,R)$ used to define $S$.  
However, as $x$ varies, which of these sets $A_i$ are possible also varies
and the choice of $x$ and the set index $i$ are correlated so choosing just
one $A_i$ would not let the rough argument succeed.
Instead, using the fact that, without loss of generality, no element of $\bits$
needs to appear in too many sets $A_i$, we show that we can eliminate the bias
by lower bounding the maximum overhead by the average overhead over all sets
$A_i$ with $x\in A_i$.

We begin by proving the last part of this strategy; i.e., we
first relate the expected maximum size of $A_i\cap S$ to the number of pairs
with distance at most $R$ in $A_i$.
\begin{lemma}\label{lemma:max-to-ER} 
Let $d\in \Z^+$, $r\in [d]$, $n\in [2^{d-1}]$ and $R=R(n,d,r)=kr\le d/2$
for integer $k\ge 2$.
Then for any $r$-pruned $\set{A_1,\ldots,A_p}$ with each $A_i\subseteq \bits$,
\begin{equation*}
\ES\left[\ \max_{i\in [p]} |A_i \cap S| \right] \   >\ 
\frac{B(d,R-r)}{2^{d-1}B(d,R)(B(d,r)-1)} \sum_{i\in [p]} |\ER(A_i)|
\ge \frac{R^{(r)}r!}{d^{2r}}\sum_{i\in [p]} |\ER(A_i)|/2^d.
\end{equation*}
\end{lemma}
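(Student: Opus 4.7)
My plan exploits the two-level structure of $\Dn$: to sample $S$, first pick a uniform $x\in\bits$ and then a uniform $n$-subset of the ball $B_x=\Ball(x,R)$. Conditioning on $x$, linearity of expectation yields $\E_S[|A_i\cap S|]=n\cdot|A_i\cap B_x|/B(d,R)$ for every $i$, so
\[
\E_S\!\left[\max_{i\in[p]}|A_i\cap S|\right]\ \ge\ \max_i\E_S\!\left[|A_i\cap S|\right]\ =\ \frac{n}{B(d,R)}\max_i|A_i\cap B_x|.
\]
Taking the outer expectation over $x\in\bits$ reduces the lemma to a lower bound on $\E_x[\max_i|A_i\cap B_x|]$ in terms of $\sum_i|\ER(A_i)|$.

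Next I would invoke the $r$-pruning hypothesis to replace the stubborn max by an average. Since every $x\in\bits$ lies in at most $B(d,r)-1$ of the $A_i$'s,
\[
\max_i|A_i\cap B_x|\ \ge\ \frac{1}{B(d,r)-1}\sum_{i:\,x\in A_i}|A_i\cap B_x|.
\]
Averaging over uniform $x$ and swapping the order of summation gives $\E_x[\max_i|A_i\cap B_x|]\ge\frac{1}{(B(d,r)-1)\,2^d}\sum_i\sum_{x\in A_i}|A_i\cap B_x|$. The double sum $\sum_{x\in A_i}|A_i\cap B_x|$ counts ordered pairs $(x,y)\in A_i\times A_i$ with $\dist(x,y)\le R$: non-self-loop edges contribute twice and self-loops once, so it equals $2|\ER(A_i)|-|A_i|$, which, after absorbing the $-|A_i|$ against the trivial bound $|\ER(A_i)|\ge|A_i|$ coming from self-loops, gives the factor of $2^{d-1}$ (rather than $2^d$) in the denominator. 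Substituting $n>B(d,R-r)$ from Definition~\ref{defn-Dn} into the first display then proves the first (strict) inequality of the lemma.

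The second inequality is arithmetic. Using
\[
\frac{\binom{d}{R-r}}{\binom{d}{R}}\ =\ \frac{R^{(r)}}{(d-R+r)^{(r)}}\ \ge\ \frac{R^{(r)}}{d^r}
\]
(valid because $R\le d/2$), together with $B(d,R-r)\ge\binom{d}{R-r}$, a standard binomial estimate $B(d,r)-1\le O(d^r/r!)$ (valid since binomial coefficients grow rapidly for $r\le\sqrt{d/2}$, so $B(d,r)$ is dominated by its top term $\binom{d}{r}$), and $B(d,R)\le 2^d$, one rewrites the constant from the first inequality as at least $R^{(r)}\,r!/(d^{2r}\cdot 2^d)$, matching the claim.

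The main obstacle is the averaging step: without the $r$-pruning guarantee from Lemma~\ref{pruninglemma}, nothing prevents $w(x)$ from being as large as $p$, and the max-to-average inequality would be vacuous. Once pruning is in hand the proof reduces to double counting ordered edges plus linearity of expectation in the ball-sampling step, and the only delicate point is carefully preserving the constant $2^{d-1}$ via the sharper identity $\sum_{x\in A_i}|A_i\cap B_x|=2|\ER(A_i)|-|A_i|$ rather than the weaker bound $|\ER(A_i)|$.
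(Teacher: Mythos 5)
Your overall strategy—condition on the center $x$ of the ball defining $\Dn$, use $r$-pruning to replace the maximum over $i$ by the average over the at most $B(d,r)-1$ sets $A_i$ containing $x$, and then double-count ordered pairs at distance $\le R$—is exactly the paper's approach for the first inequality. That part is sound, and your ordered-pair identity $\sum_{x\in A_i}|A_i\cap B_x|=2|\ER(A_i)|-|A_i|$ is actually more precise than the paper's asserted $2|\ER(A_i)|$ (the paper's $\G$ carries self-loops, so each $(x,x)$ is counted once, not twice). However, your ``absorption'' step is backwards: from $2|\ER(A_i)|-|A_i|\ge |\ER(A_i)|$ you lose the factor of 2, landing you at a $2^d$ denominator, not $2^{d-1}$. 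The paper's stated $2^{d-1}$ relies on the equality $2|\ER(A_i)|$, which with self-loops is slightly too generous. This factor-of-2 discrepancy is immaterial to the asymptotic conclusions, but your claim to recover $2^{d-1}$ does not follow from your own (correct) identity.

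The more serious gap is in your arithmetic for the second inequality. Your plan combines $B(d,R-r)\ge\binom{d}{R-r}$ with $B(d,R)\le 2^d$, but this pair gives a lower bound of order $\binom{d}{R-r}/2^d$ for the ratio $B(d,R-r)/B(d,R)$, which is exponentially smaller than the target $R^{(r)}/d^r$ whenever $R=o(d)$. The identity $\binom{d}{R-r}/\binom{d}{R}\ge R^{(r)}/d^r$ that you cite does not transfer to the cumulative ball sizes: the inequality $B(d,R-r)/B(d,R)\ge\binom{d}{R-r}/\binom{d}{R}$ is false (e.g.\ $d=10$, $R=4$, $r=2$ gives $56/386<45/210$). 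To carry out the second inequality you actually need a bound on the ratio of \emph{cumulative} ball sizes of the form $B(d,R)/B(d,R-r)\le 2d^r/R^{(r)}$, which is a non-trivial inductive fact proved separately by the paper as Proposition~\ref{ballbound}. Without that proposition (or an equivalent), your derivation of the second inequality does not go through.
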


\begin{proof} 
The distribution $\Dn$ first chooses a center $x~\in~\bits$ uniformly at random
and then chooses $S~\subseteq~\Ball(x,R)$ uniformly at random from all subsets of
size $n$.  Let $\Dn^x$ denote the conditional distribution of $S$ given a
fixed choice of $x$.  
Then, since $\set{A_1,\ldots,A_p}$, for all
$x\in \bits$, we have \[\w(x) = |\{ i \in [p]\ :\  x \in A_i\}|\le B(d,r)-1.\]
We can restrict our choice of maximum to sets containing the center $x$
and lower bound the maximum over such sets by their average to obtain
\begin{align*}
\ES\left[\max_{i\in [p]} |A_i \cap S| \right ]  
&= \frac{1}{2^d} \sum_{x\in \bits} \E_{S\sim \Dn^x}\left[\max_{i\in[p]} |A_i \cap S| \right]  \\
&\ge \frac{1}{2^d} \sum_{x\in \bigcup_{i} A_i} \E_{S\sim \Dn^x}\left[\max_{i\in[p]} |A_i \cap S| \right]  \\
&\ge 
\frac{1}{2^d} \sum_{x\in \bigcup_{i} A_i} \E_{S\sim \Dn^x}\left[\max_{i:\ x\in A_i} |A_i \cap S| \right]  \\
&\ge \frac{1}{2^d} \sum_{x\in \bigcup_{i} A_i} \E_{S\sim \Dn^x}\left[\sum_{i:\ x\in A_i} \frac{|A_i \cap S|}{B(d,r)-1} \right]\qquad\mbox{\ \ since }0<w(x)\le B(d,r)-1\\
&= \frac{1}{2^d} \sum_{x\in \bigcup_{i} A_i} \E_{S\sim \Dn^x}\left[\sum_{i\in [p]}\mathbb{1}_{x\in A_i} \cdot \frac{|A_i \cap S|}{B(d,r)-1} \right]\\
&= \frac{1}{2^d} \sum_{x\in \bigcup_{i} A_i} \sum_{i\in [p]} \mathbb{1}_{x\in A_i} \cdot
\E_{S\sim \Dn^x}\left[\frac{|A_i\cap S|}{B(d,r)-1} \right]\\
&= \frac{1}{2^d(B(d,r)-1)} \sum_{i\in [p]} \sum_{x\in A_i} 
\E_{S\sim \Dn^x}\left[|A_i\cap S| \right].\numberthis\label{fact1}
\end{align*}
Now for each $i\in [p]$ and $x\in A_i$, we have $S\subseteq \Ball(x,R)$; therefore
\begin{align*}
\sum_{x\in A_i} \E_{S\sim \Dn^x}\left[|A_i\cap S|\right]
&= \sum_{x\in A_i} \E_{S\sim \Dn^x}\left[\sum_{y\in \Ball(x,R)\cap A_i} \mathbb{1}_{y\in S}\right]\\
&= \sum_{x\in A_i} \sum_{y\in \Ball(x,R)\cap A_i} \E_{S\sim\Dn^x}\left[\mathbb{1}_{y\in S}\right]\\
&= \sum_{x\in A_i} \sum_{y\in \Ball(x,R)\cap A_i} \frac{n}{B(d,R)}\\
&= \frac{n}{B(d,R)} \cdot 2|\ER(A_i)|
\end{align*}
since the double summation counts each pair $\set{x,y}$ in $A_i$ of Hamming
distance at most $R$ exactly twice.
Now, by construction of distribution $\Dn$, $n$ is larger than $B(d,R-r)$,
so inserting this bound in (\ref{fact1}) yields that
$\ES\left[\max_{i} |A_i \cap S| \right ]$ is larger than
$\frac{B(d,R-r)}{2^{d-1}B(d,R)(B(d,r)-1)} \sum_{i\in [p]} |\ER(A_i)|$
as required.
Now, $B(d,r)\le d^r/r!$ and, since $R>r$, Proposition~\ref{ballbound}
proved in the appendix shows that
$B(d,R)/B(d,R-r)\le (d+1)d^{r-1}/R^{(r)}$ which is at most $2d^r/R^{(r)}$.
Together these yield the final claimed bound.
\end{proof}

To apply Lemma~\ref{lemma:max-to-ER}, we need a lower bound on
$|\ER(A_i)|$.  We obtain such a lower bound by proving a purely
graph-theoretic result.  This result relates  
different distances in subsets of the hypercube.
More precisely, we prove that subsets of $\bits$ with sufficiently
large density of Hamming distance $r$ pairs (i.e., sufficiently dense induced
subgraphs of $\Gdr$) also have a relatively large
density of pairs at Hamming distance at most $R$ when $R$ is a multiple of $r$.
This means that we can reduce the problem of lower bounding $|\ER(A_i)|$
to that of lower bounding $\edr(A_i)$.  
We give this lemma and its proof in the next subsection.

\subsection{Relating distance $r$ density and distance $R$ density on the hypercube}

\begin{lemma}\label{cor:path-lb}
\fixb.
Assume that $A \subseteq \bits$ satisfies $\edr(A) 
\geq 4\binom{R-r}{b+1}\binom{d}{r-b-1}$.
Then
\[ 
	\eR(A) \geq \frac{\edr(A)^{R/r}}{\denom}.
\]
\end{lemma}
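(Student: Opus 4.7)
My plan is to prove this lemma by a Sidorenko-type walk-counting argument in the subgraph $\Gdr[A]$ of distance-exactly-$r$ pairs inside $A$, with $k := R/r \ge 2$. The strategy is to lower bound the number of length-$k$ walks in $\Gdr[A]$ by convexity, show most such walks are ``spread out,'' and bound the per-endpoint-pair multiplicity to harvest many distance-$\le R$ pairs.

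For the walk lower bound, I would apply a Blakley--Roy type convexity inequality to the adjacency matrix $M$ of $\Gdr[A]$ and the all-ones vector $\1$: since $\langle\1,M\1\rangle = 2|\Edr(A)|$ and $\|\1\|^2 = |A|$, iterated Cauchy--Schwarz (equivalently, the Blakley--Roy inequality for symmetric nonnegative matrices) gives $W_k := \langle\1,M^k\1\rangle \ge |A|\cdot (2\edr(A))^k$ for the number of length-$k$ walks in $\Gdr[A]$. Since each step flips exactly $r$ coordinates, the endpoints $x_0,x_k$ of any such walk satisfy $\dist(x_0,x_k) \le kr = R$, so every walk contributes to an ordered pair counted by $\ER(A)$.

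Writing $m_i := x_{i-1}\oplus x_i$ (so $|m_i|=r$), I would call a walk \emph{bad at position $i$} (for $1 \le i \le k-1$) if $|m_i \cap m_{i+1}| \ge b+1$, which is equivalent to $\dist(x_{i-1},x_{i+1}) \le 2(r-b-1)$, and \emph{good} if good at every position. To bound walks bad at a fixed~$i$, I would condition on the two outer segments $x_0,\ldots,x_{i-1}$ and $x_{i+1},\ldots,x_k$ in $\Gdr[A]$ (counted via Blakley--Roy on the remaining adjacency matrix) and count middle completions $x_i \in A$: for a fixed close pair $(x_{i-1},x_{i+1})$, the constraint on $(m_i,m_{i+1})$ with overlap $\ge b+1$ is controlled by at most $\binom{R-r}{b+1}\binom{d}{r-b-1}$ configurations (selecting the $b+1$ ``reused'' coordinates from within the union of the other $k-1$ moves, of total size at most $R-r$, times the $r-b-1$ ``fresh'' coordinates in $[d]$). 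The hypothesis $\edr(A) \ge 4\binom{R-r}{b+1}\binom{d}{r-b-1}$ is exactly calibrated so that bad walks at position $i$ amount to at most $W_k/(4(k-1))$; union-bounding over the $k-1$ positions leaves at least $3W_k/4$ good walks.

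Finally, for each ordered endpoint pair $(x_0,x_k)$ I would bound the number of good length-$k$ walks between them by $O(4^k\cdot R!\cdot d^{bk})$: since consecutive moves overlap in at most $b$ bits, the move union covers at most $R$ distinct coordinates with slack $\le b$ per junction, so a good move sequence is determined by ordering at most $R$ coordinates into $k$ ordered blocks ($\le R!$ choices) and choosing up to $b$ reused coordinates per junction ($\le d^{bk}$ choices), with leftover constants absorbed into the factor $4^{R/r}$. Combining, $|\ER(A)| \ge |A|\cdot \edr(A)^k/(\denom)$, yielding $\eR(A) \ge \edr(A)^{R/r}/(\denom)$ as required. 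The main obstacle will be the bad-walk count: one must carefully decouple the ambient-cube combinatorics from the restriction to $A$, and verify that the specific threshold $\binom{R-r}{b+1}\binom{d}{r-b-1}$ in the hypothesis is the precise value that makes the telescoping work uniformly for every $b \in \{0,1,\ldots,\lfloor r/2\rfloor\}$.
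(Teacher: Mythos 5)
Your proposal replaces the paper's explicit $(R,b)$-path counting with a Blakley--Roy walk count plus a subtraction of ``bad'' walks, and it uses a genuinely different notion of ``good'': you require only that \emph{consecutive} moves overlap in at most $b$ coordinates (equivalently $\dist(x_{i-1},x_{i+1})\geq 2(r-b)$), whereas the paper's $(R,b)$-path condition imposes the \emph{cumulative} constraint $\dist(v_0,v_j)\geq j(r-2b)$ for every $j$. These are not equivalent, and the gap is fatal to your multiplicity step.

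The paper's per-pair multiplicity bound $R!\,d^{bR/r}$ crucially uses the consequence $\dist(v_0,v_{R/r})\geq R - 2bR/r$: once $u,v$ are fixed, the size-$R$ move multiset is determined by the support of $u\oplus v$ plus a size-$(R-\dist(u,v))/2\leq bR/r$ multiset of ``paired'' coordinates, giving at most $d^{bR/r}$ extra choices. Your local condition on consecutive moves does \emph{not} prevent cancellation between non-adjacent moves, so it gives no lower bound on $\dist(x_0,x_k)$ and hence no such cap. Concretely, take $r=2$, $b=0$, $k=4$, $u=v=0^d$: the move sequence $m_1=\{1,2\}$, $m_2=\{3,4\}$, $m_3=\{1,2\}$, $m_4=\{3,4\}$ is ``good'' for you (consecutive moves disjoint), and more generally choosing $m_1,m_2$ freely subject to $m_1\cap m_2=\emptyset$ and then forcing $m_3=m_1$, $m_4=m_2$ gives $\Theta(d^4)$ good length-$4$ walks from $0^d$ to itself in $\Gamma_2$. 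Your claimed bound $O(4^k R!\,d^{bk})=O(1)$ is off by a growing power of $d$, so the harvest step collapses. There is also a secondary concern upstream: subtracting walks that are bad at position $i$ from a Blakley--Roy aggregate is not as clean as you suggest, because the bad count at step $i$ scales with the number of prefixes reaching $v_i$ times the number of continuations from $v_{i+1}$, and Blakley--Roy gives no per-vertex degree control to make the product telescope. The paper's minimum-degree induction (\`a la Alon--Ruzsa/Katz--Tao) is precisely what supplies that local control and lets each of the $R/r-1$ steps retain at least $\d^*/2$ ``non-bad'' choices, and its notion of bad is measured against $\supp(v_0\oplus v_j)$ (a set of size at most $R-r$), which is why the threshold $\binom{R-r}{b+1}\binom{d}{r-b-1}$ appears.

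To repair the argument you would need to make the goodness condition global (e.g., adopt the $(R,b)$-path condition $\dist(v_0,v_j)\geq j(r-2b)$) and then lower-bound the number of such paths directly -- at which point you essentially recover the paper's Lemma on $\pi_{R,b}(A)$ rather than subtracting from a Blakley--Roy estimate.
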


This lemma is nearly tight; examples include Hamming balls of varying
radii and subcubes of varying dimensionality.  Some lower bound on $\edr(A)$
is necessary for this to hold since unions of well-separated balls of radius
$r$ have many pairs at distance up to $r$, but there are no pairs precisely
at distances between $2r$ and $R$.

The overall approach to proving Lemma~\ref{cor:path-lb} is to define a certain
class of paths of total length
$R/r$ in the induced subgraph $\Gdr[A]$ and prove that this set of paths is
large.  The start and end of each such path yields a pair vertices of $A$
at distance at most $R$ in the hypercube.
We then show that any pair of vertices in the hypercube can be connected by
relatively few such paths, yielding a large lower bound on the total number
of pairs of vertices in $A$ of distance at most $R$.

\begin{defn}  \fixb. An \emph{$(R,b)$-path} is a sequence $(v_0,v_1,\ldots,v_{R/r}) \in \{0,1\}^{d \times (R/r+1)}$ with  $\dist(v_{j-1},v_j) = r$ and 
$\dist(v_0,v_j) \geq j(r-2b)$ for every $j \in [R/r]$.  Let $\pi_{R,b}(A)$ denote the 
number of $(R,b)$-paths with all vectors in $A\subseteq \bits$. 
\end{defn}

Note that an $(R,b)$-path generalizes a $\Gdr$ shortest path, which is an
$(R,0)$-path. 
Lemma~\ref{cor:path-lb} is the immediate consequence of the following two
lemmas.

\begin{lemma}\label{lem:sid} \fixb.  Let $A \subseteq \bits$ be a subset and define $N=|A|$ and $M=|\Edr(A)|$.  If 
$ \frac{M}{N} \geq 4\binom{R-r}{b+1}\binom{d}{r - b -1}$,
then 
\[
	\pi_{R,b}(A) \geq N\left(\frac{M}{4N}\right)^{R/r}.
\]
\end{lemma}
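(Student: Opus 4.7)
The plan is to induct on $j \in \{0, 1, \ldots, R/r\}$ with $P_j$ counting the sequences $(v_0, v_1, \ldots, v_j) \in A^{j+1}$ with $\dist(v_{i-1}, v_i) = r$ and $\dist(v_0, v_i) \ge i(r - 2b)$ for every $i \in [j]$; the base case $P_0 = N$ is immediate, and the target is $\pi_{R,b}(A) = P_{R/r} \ge N(M/(4N))^{R/r}$. It suffices to establish the one-step recurrence $P_{j+1} \ge (M/(4N)) \cdot P_j$.

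For the inductive step, I extend each valid length-$j$ walk by appending a new vertex $v_{j+1} \in A$ with $\dist(v_j, v_{j+1}) = r$. Among the $\deg_{\Gdr[A]}(v_j)$ candidate neighbors, a ``bad'' extension is one violating the new constraint $\dist(v_0, v_{j+1}) \ge (j+1)(r - 2b)$. Parameterizing a candidate by the number $i$ of its $r$ flipped coordinates that lie in the symmetric difference of $v_0$ and $v_j$, one computes $\dist(v_0, v_{j+1}) = D + r - 2i$, where $D = \dist(v_0, v_j)$. Using the triangle inequality $D \le jr \le R - r$ and the inductive constraint $D \ge j(r - 2b)$, every bad extension must satisfy $i \ge b + 1$, so their count is at most $\sum_{i \ge b+1} \binom{D}{i}\binom{d - D}{r - i} \le 2 \binom{R - r}{b + 1}\binom{d}{r - b - 1}$ by a standard tail-of-binomials estimate. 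The hypothesis on $M/N$ then reduces this to at most $M/(2N)$ bad extensions per valid walk.

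The quantitative core is a lower bound on the total number of candidate extensions, $\sum_{\text{valid walks}} \deg_{\Gdr[A]}(v_j) \ge (2M/N) \cdot P_j$; subtracting the bad-extension bound yields $P_{j+1} \ge (2M/N - M/(2N)) P_j \ge (M/(4N)) P_j$, and iterating the recurrence $R/r$ times gives the claim. For this lower bound I would appeal to the log-convexity of walk counts in graphs: for unconstrained walks $W_j$ in $\Gdr[A]$, the Sidorenko property of paths yields $W_{j+1}/W_j \ge 2M/N$. I would then argue that the same one-step growth rate passes to the constrained counts $P_j$ because the constraint depends only on the walk's history $\dist(v_0, v_j)$ and therefore does not bias the endpoint $v_j$ toward low-degree vertices of $\Gdr[A]$.

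The main obstacle is formalizing this invariance. If the direct transfer of log-convexity fails, I would fall back on a Cauchy--Schwarz argument establishing log-convexity of $P_j$ itself---grouping valid walks of length $j$ with their two-sided extensions and exploiting the reversal symmetry of $\Gdr[A]$ edges---or alternatively on a direct double-counting that conditions on the pair $(v_0, v_j)$ and exploits the uniform structure of distance-$r$ neighborhoods in the hypercube.
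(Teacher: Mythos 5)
Your proposal inducts on the path length $j$, trying to show $P_{j+1} \ge (M/(4N)) P_j$ by comparing to an average-degree quantity, whereas the paper inducts on $N$: it splits into the case where the minimum degree $\delta^*$ of $\Gdr[A]$ is at least $M/(2N)$ (in which case a per-step lower bound $\delta^* - \binom{jr}{b+1}\binom{d}{r-b-1} \ge \delta^*/2 \ge M/(4N)$ holds uniformly, using only the minimum degree), and the case where some vertex has degree below $M/(2N)$, in which case that vertex is deleted and the induction on $N$ absorbs the loss. The entire point of the paper's induction-on-$N$ structure is to reduce to a min-degree condition precisely so that it never has to argue about average degree of the endpoints of constrained walks.

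Your version, by contrast, leans on exactly the step that this structure is designed to avoid: you need $\sum_{\text{valid walks}} \deg_{\Gdr[A]}(v_j) \ge (2M/N) P_j$, and you justify it by asserting that the constraint $\dist(v_0,v_i)\ge i(r-2b)$ ``does not bias the endpoint $v_j$ toward low-degree vertices.'' That is a substantive claim, not an invariance that comes for free, and it is exactly where the argument can fail. The Blakley--Roy inequality $W_{j+1}/W_j \ge 2M/N$ for \emph{unconstrained} walks is a global spectral/convexity fact; restricting to a subfamily of walks (those satisfying the $\dist(v_0,v_i)$ constraints) can absolutely concentrate the endpoint distribution on low-degree vertices. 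Note also that the constraints are defined relative to $v_0$, so the walk family is not reversal-symmetric, which undercuts the Cauchy--Schwarz/reversal fallback you suggest. Without a proof of the no-bias step, the proposal has a genuine gap. (Your other counting, such as the bound on bad extensions via $i\ge b+1$ coordinates in $\supp(v_0\oplus v_j)$, matches the paper's and is fine, though the paper gets by with the slightly cleaner bound $\binom{jr}{b+1}\binom{d}{r-b-1}$ without an extra factor of 2.) If you want to make your plan go through, the natural repair is to import the paper's trick: either first delete low-degree vertices until $\delta^* \ge M/(2N)$ and then run your step-by-step bound with $\delta^*$ in place of the average degree, or set up a joint induction on $(N,j)$ mirroring the paper's Case~2.
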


\begin{lemma}\label{claim:rb-vs-ER} \fixb.  For any $A \subseteq \bits$,
\[
	|\ER(A)| \geq \frac{\pi_{R,b}(A)}{R! \cdot d^{bR/r}}.
\]
\end{lemma}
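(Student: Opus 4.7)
The plan is to map each $(R,b)$-path $(v_0, v_1, \ldots, v_{R/r})$ to its endpoint pair $\{v_0, v_{R/r}\}$ and bound the preimage size. By the triangle inequality together with $\dist(v_{j-1}, v_j) = r$, we have $\dist(v_0, v_{R/r}) \leq R$, so this map lands in $\ER(A)$. The lemma then reduces to showing that every unordered pair $\{x, y\} \in \ER(A)$ has at most $R! \cdot d^{bR/r}$ preimages; summing then gives $\pi_{R,b}(A) \leq R! \cdot d^{bR/r} \cdot |\ER(A)|$.

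Fix endpoints $x, y$ with $s = \dist(x, y)$. Applying the $(R,b)$-constraint at $j = R/r$ yields $s \geq R - 2bR/r$, i.e., $(R - s)/2 \leq bR/r$. Each $(R,b)$-path from $x$ to $y$ is specified by its step-subsets $(\Delta_1, \ldots, \Delta_{R/r})$ with $|\Delta_j| = r$. I would ``linearize'' the path by choosing an ordering within each $\Delta_j$, producing $(r!)^{R/r}$ fine walks per path, where a fine walk is a sequence $(i_1, \ldots, i_R) \in [d]^R$ such that flipping $i_1, \ldots, i_R$ in order sends $x$ to $y$. Thus the number of $(R,b)$-paths from $x$ to $y$ equals the number of such fine walks divided by $(r!)^{R/r}$.

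The main counting step shows that the number of fine walks of length $R$ from $x$ to $y$ is at most $R! \cdot d^{(R-s)/2} / [2^{(R-s)/2} ((R-s)/2)!]$. In any fine walk, each coordinate of $x \oplus y$ is flipped an odd number of times while each other coordinate is flipped an even number of times. I will encode every fine walk by the following data: (i) a choice of $s$ ``solo'' time slots from $[R]$, (ii) a bijection from these slots to the coordinates of $x \oplus y$, (iii) a pairing of the remaining $R - s$ slots into $(R-s)/2$ pairs, and (iv) a coordinate in $[d]$ assigned to each pair; the walk is recovered by labeling each slot with its associated coordinate. This encoding is surjective, so the fine-walk count is bounded by the number of tuples, which simplifies to $\binom{R}{s} \cdot s! \cdot (R-s)! / [2^{(R-s)/2}((R-s)/2)!] \cdot d^{(R-s)/2} = R! \cdot d^{(R-s)/2} / [2^{(R-s)/2}((R-s)/2)!]$. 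Substituting $(R-s)/2 \leq bR/r$ bounds this by $R! \cdot d^{bR/r}$, and dividing by the linearization factor $(r!)^{R/r}$ (which absorbs the factor-of-2 slack from the two endpoint orderings when $r \geq 2$) yields the desired per-edge bound.

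The main obstacle is verifying the surjectivity of the tuple encoding in steps (i)--(iv) when a single coordinate $i$ is flipped $c_i \geq 3$ times: for such $i$, valid tuples can place $i$ in both a solo and multiple pair roles, and one needs to check that every fine walk can in fact be reconstructed from at least one admissible tuple by this recipe. A minor boundary case is $r = 1$, where $(r!)^{R/r} = 1$ provides no slack for unordered endpoints; here a direct verification using the (forced) shortest-path structure handles the constants.
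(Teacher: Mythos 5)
Your proposal is at heart the same counting argument as the paper's, just run through a finer decomposition. The paper encodes a path from $u$ to $v$ directly by a pair $(M,\Pi)$: a size-$R$ multiset of flipped coordinates, bounded by $d^{bR/r}$ because the $R-\dist(u,v)$ ``extra'' flips come in identical pairs, and an ordered partition of $M$ into $R/r$ blocks of size $r$, bounded generously by $R!$. You instead linearize each path into $(r!)^{R/r}$ fine walks, count those via your (solo slots, bijection, pairing, labels) tuple --- arriving at the same $R!\cdot d^{(R-s)/2}$ numerator --- and divide back by $(r!)^{R/r}$. The two routes are the same combinatorial idea with the overcount placed differently. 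Your encoding is in fact fine as written: a coordinate flipped $m\ge 3$ times simply gets one solo slot plus pairs for the remaining occurrences, so the surjectivity worry you flag is not an obstacle. What your version buys is an explicit $(r!)^{R/r}$ factor of slack that the paper's $|\Pi|\le R!$ bound silently discards.

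You then need that slack, because you map a path to its \emph{unordered} endpoint pair $\{v_0,v_{R/r}\}$, so the fiber over $\{x,y\}$ collects both orientations and costs a factor of $2$. For $r\ge 2$ this is covered since $(r!)^{R/r}\ge 4$. For $r=1$ (hence $b=0$ and $s=R$ forced) the slack is $1$, and a pair at distance exactly $R$ genuinely admits $2R!$ oriented $(R,0)$-paths (reversal is a bijection between orientations), so the per-pair bound $R!\cdot d^{bR/r}=R!$ fails by a factor of $2$; the promised ``direct verification'' is not given, and no per-pair fix exists. You should note, though, that you have not introduced a new gap: the paper's own proof is imprecise on the same point (it ranges over ordered pairs $(u,v)\in A\times A$ while $\ER(A)$ counts unordered edges, and at $r=1,b=0$ the $\Pi$ bound is tight, so the very same factor of $2$ is elided there). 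In the downstream use the $4^{R/r}$ appearing in $\denom$ in Lemma~\ref{cor:path-lb} absorbs a stray factor of $2$ with room to spare, so this is a constant-level imprecision rather than a fatal error, but it deserves a remark rather than a hand-wave.
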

More precisely, from Lemmas~\ref{lem:sid} and~\ref{claim:rb-vs-ER} we have the following,  which yields Lemma~\ref{cor:path-lb},
\begin{equation*}
	\eR(A) \ = \  \frac{|\ER(A)|}{|A|} \  \geq \ \frac{\pi_{R,b}(A)}{R! \cdot d^{Rb/r}\cdot |A|} \ \geq \ \frac{e_r(A)^{R/r}}
	{4^{R/r}\cdot R!\cdot d^{Rb/r}}.
\end{equation*}

We begin by proving Lemma~\ref{lem:sid};
its proof is similar to proofs of lower bounds on the number of walks in a
graph given by Alon and Rusza~\cite{alon-rusza} and
Katz and Tao~\cite{katz-tao}.

\begin{proof}[Proof of Lemma~\ref{lem:sid}]
We prove the bound by induction on $N$.  For $N \leq 2$, the bound holds trivially.  Let $\d^*$ denote the 
minimum degree in the subgraph of $\Gdr$ induced by $A$.  We first consider the case with $\d^* \geq M/(2N)$.  
We will count $(R,b)$-paths $(v_0,\ldots,v_{R/r})$ by lower bounding the number of choices for $v_{j+1}$ given 
the path up to $v_j$.   We will argue that

\begin{equation}\label{rb-path-product}
\pi_{R,b}(A)\ \geq \ \d^*N\prod_{j=1}^{R/r - 1}\left[\d^* - \binom{jr}{b+1} \binom{d}{r - b - 1}\right].
\end{equation}
There are $\d^*N$ possibilities for $v_0$.  For $j = 1, 2, \ldots, R/r - 1$,
we say that a vertex $v_{j+1}$ is {\em bad} if 
\[
	\dist(v_0, v_{j+1}) < \dist(v_0, v_{j}) + (r-2b). 
\]
We must exclude the bad neighbors of $v_j$ when choosing $v_{j+1}$.  Each neighbor $v_{j+1}$ of $v_j$ in $\Gdr$ differs from $v_j$ in exactly $r$ coordinates.
Crucially,  for $v_{j+1}$ to be bad, at least $b+1$ of the differing
coordinates must be in the support of $v_0 \oplus v_j$ because each 
such coordinate used in this step reduces the distance of $v_{j+1}$
from $v_0$ by exactly 2.
Since $|v_0 \oplus v_j| = \dist(v_0,v_j) \leq jr$, and the other $r - b -1$
coordinates are arbitrary, there are at most 
$
	\binom{jr}{b+1} \binom{d}{r - b - 1}
$
bad neighbors of $v_j$ in $\Gdr$.  We have assumed that $\d^*$ satisfies
\begin{equation*}
	\d^* \geq \frac{M}{2N}  
	\geq 
	2\binom{R-r}{b+1} \binom{d}{r - b - 1}
	\geq 
	2\binom{jr}{b+1} \binom{d}{r - b - 1}.
\end{equation*}
Since each term in the product in (\ref{rb-path-product}) is at least $\d^*/2$, 
we conclude $\pi_{R,b}(A)\geq N\left(\frac{M}{4N}\right)^{R/r}$.

Now assume that the minimum degree $\d^*$ is less than $M/(2N)$. Remove a vertex with degree $\d^*$.  The 
resulting graph on $N-1$ vertices has larger average degree.   By induction, the number of $(R,b)$-paths on $N-1$ 
vertices is at least
\[ 
	(N-1)\left(\frac{M-\d^*}{4(N-1)} \right)^{R/r}.  
\]
We claim this is at least $N(M/(4N))^{R/r}$.  Rearranging both sides,  we need to show
\[
	\left(\frac{M-\d^*}{M} \right)^{(R/r)/((R/r)-1)} 
	\geq \frac{N-1}{N}.
\]
By the assumptions $\d^* < M/(2N)$ and $R/r \geq 2$, we have, as desired, 
\begin{equation*}
	\left(1-\frac{\d^*}{M} \right)^{(R/r)/((R/r)-1)}
	>  \left(1 - \frac{1}{2N}  \right)^{(R/r)/((R/r)-1)}
	\geq \left( 1 - \frac{1}{2N} \right)^2 
	> 1- \frac{1}{N}.
\end{equation*}
\end{proof}

We now finish the proof of Lemma~\ref{cor:path-lb} by proving
Lemma~\ref{claim:rb-vs-ER} which says that the number of pairs of distance
$\le R$ in $A$ is relatively large compared to the
number of $(R,b)$-paths in $A$.

\begin{proof}[Proof of Lemma~\ref{claim:rb-vs-ER}]
Consider a pair $(u,v) \in A \times A$ with $\dist(u,v) \leq R$.     
We claim that the number of $(R,b)$-paths  between $u$ and $v$ is at most $R! \cdot d^{bR/r}$. 
Such a path is specified by a sequence of edges in $\Gdr$, and once we fix $u$ and $v$,  this sequence is specified 
by 
\begin{enumerate}\item a size $R$ multiset $M$ containing coordinates in~$[d]$ differing between 
adjacent vertices, \item a partition $\Pi$ of $M$ into $R/r$ sets of size $r$. \end{enumerate}  
We claim that there are at most $d^{bR/r}$ choices for $M$ and $R!$ choices for $\Pi$.  We simply upper bound the 
count for $\Pi$ by the number of permutations of $R$ elements. Moving on to $M$, let $k = \dist(u,v)$ and recall 
that the $(R,b)$-path definition requires $k \geq R - 2bR/r$.  Observe that $M$ contains the $k$ elements in the 
support of $u \oplus v$.  Additionally, $M$ contains $(R -k)$ elements that come in identical pairs.  That is, they form 
a partition into $(R -k)/2$ pairs of identical elements.  Once $u \oplus v$ is fixed, these pairs determine $M$, and 
there are at most $d^{(R -k)/2}\leq d^{bR/r}$ choices for the pairs. 
\end{proof}

\subsection{Proof of Theorem~\ref{mainlb}(b)}

As described in our overall proof strategy, our lower bound involves
a tradeoff between the average number of times 
$t = \sum_{i\in [p]} |A_i|/2^d$ that elements of $\bits$ are
covered by $\set{A_1,\ldots,A_p}$ and the edge density of the individual $A_i$,
and hence the size of the intersections of sets $S\sim \Dn$, which are
similar to Hamming balls in structure.
The function $g$ that we define below helps us capture this overhead tradeoff.

\begin{definition}
Define $g:[1,\infty) \to \R^+$ by \ \ 
$ \displaystyle
	g(\tau) =  \min_{\substack{(A_1,\ldots,A_p)\in A^{good}_\delta\\ \sum_i |A_i| \ \leq \ \tau\cdot 2^d}}\  \ES\left[ \max_i |A_i \cap S|\right]\cdot \frac{p}{n}
$
\end{definition}

In particular, our overhead tradeoff is encapsulated in the
following lemma.

\begin{lemma}\label{prop:g(t)-lower-bound} 
Let $(A_1,\ldots,A_p)\in A^{good}_\delta$ and let
$t = 2^{-d} \sum_{i} |A_i|$.  Then,
\[ 
	\ES [\ \max_{i} |A_i \cap S| \ ] \ \geq\ \max \{t,g(t)\}\cdot \frac{n}{p}
\]
\end{lemma}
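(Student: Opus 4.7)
The plan is to prove the two lower bounds $\ES[\max_i |A_i \cap S|] \geq t\cdot n/p$ and $\ES[\max_i |A_i \cap S|] \geq g(t) \cdot n/p$ separately and then take their maximum. The second inequality will essentially unfold from the definition of $g$, while the first is a short averaging argument combined with a computation of the marginal probabilities of $\Dn$.

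For the $t\cdot n/p$ bound, I would use the trivial inequality $\max_i |A_i \cap S| \ge \frac{1}{p}\sum_{i \in [p]} |A_i \cap S|$, take expectations, and exchange the sum with the expectation. It then suffices to compute $\E_{S \sim \Dn}[|A_i \cap S|]$ for each $i$. By linearity, this equals $\sum_{y \in A_i} \Pr_{S \sim \Dn}[y \in S]$, so I need only to show that every $y \in \bits$ has $\Pr_{S\sim\Dn}[y \in S] = n/2^d$. The argument for this is a short symmetry calculation: under $\Dn$, the center $x$ is uniform in $\bits$, so $\Pr[y \in \Ball(x,R)] = \Pr[x \in \Ball(y,R)] = B(d,R)/2^d$; conditioned on this event, $y$ is in $S$ with probability $n/B(d,R)$, and the two factors of $B(d,R)$ cancel. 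Hence $\E_{S\sim\Dn}[|A_i \cap S|] = |A_i|\cdot n/2^d$, and summing over $i$ produces $\sum_i \E[|A_i\cap S|] = n \cdot (\sum_i |A_i|)/2^d = nt$. Dividing by $p$ gives the desired bound.

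For the $g(t)\cdot n/p$ bound, I would simply observe that the tuple $(A_1,\ldots,A_p)$ in question is already a valid candidate in the minimization defining $g(t)$: by assumption it lies in $A^{good}_\delta$, and by the definition of $t = 2^{-d}\sum_i |A_i|$ it satisfies $\sum_i |A_i| = t\cdot 2^d \le t\cdot 2^d$. Consequently, the value $\ES[\max_i |A_i \cap S|]\cdot p/n$ is at least $g(t)$, which rearranges to the desired inequality.

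Taking the maximum of the two bounds produces the claimed inequality. I do not anticipate any real obstacle here; both pieces are essentially one-line arguments once the correct quantities are identified. The only mild subtlety is ensuring that $t$ lies in the declared domain $[1,\infty)$ of $g$, but this is either guaranteed by the covering condition for $(A_1,\ldots,A_p) \in A^{good}_\delta$ (forcing $\sum_i |A_i|$ to be at least on the order of $2^d$) or, if not, handled by the fact that $g$ is non-increasing and can be extended downward without affecting the inequality.
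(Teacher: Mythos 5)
Your proof is correct and takes essentially the same approach as the paper: the $g(t)$ bound follows immediately from the definition of $g$ (since the fixed tuple is a feasible point in the minimization), and the $t$ bound uses $\max_i \geq$ average together with the uniform marginal $\Pr_{S\sim\Dn}[y\in S] = n/2^d$. The only difference is cosmetic — the paper takes that marginal as given, while you justify it via the $x \leftrightarrow y$ symmetry of Hamming balls.
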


\begin{proof}  
We first prove the result for $\ohbound$.
The bound in terms of $g(t)$ follows immediately from the definition of $g$. 
For the bound in terms of $t$, observe that, since each $x \in \bits$ 
has $\mathbf{Pr}_S[x \in S] = n2^{-d}$, by linearity of expectation we have
\[
	\ES [\ \max_{i} |A_i \cap S| \ ]
	\ \geq \ 
	\ES \left[\frac{1}{p} \sum_{i\in[p]} |A_i \cap S| \right] 
	\ = \ 
	\frac{1}{p} \sum_{i\in[p]} |A_i|\cdot\frac{n}{2^d}
	\ = \ 
	t\cdot \frac{n}{p}.
\]
\end{proof}
We prove Theorem~\ref{mainlb}(b) by bounding $\max\{t,g(t)\}$.  
A lower bound on $\max\{t,g(t)\}$ is a value $t^*$ such that either $t \geq t^*$ or $g(t) \geq t^*$ for any $t$.  Therefore, we must prove $g(t) \geq t^*$ whenever $t \leq t^*$.  To prove this, we will use the fact that $g$ is a decreasing function and show that the value $t^*$ is such that $g(t) \geq t$ whenever $t \leq t^*$.  Indeed, this implies $g(t) \geq g(t^*) \geq t^*$ for all $t \leq t^*$.  

\begin{proof}[Proof of Theorem~\ref{mainlb}(b)]
We first state the value of a threshold $t^*$ such that $g(t)\ge t$ for all
$t\le t^*$ and then argue that it
implies Theorem~\ref{mainlb}(b).

\begin{lemma}\label{claim:t} Let 
$n,d,r,p$ be as in the statement of Theorem~\ref{mainlb}(b) and
$R = R(n,d,r)$ be as in Definition~\ref{defn-Dn}.
Define 
\begin{equation*}
\beta=\begin{cases}\gamma/2&\mbox{if $\gamma r>1$}\\
\gamma&\mbox{if $\gamma r\le 1$}.\end{cases}
\end{equation*}
Then, $g(t) \geq t$ whenever 
$1\leq t \leq t^*$ for \ \ 
$
	t^* = t^*(n,d,p,r) \ \deq \  \displaystyle\frac{\delta d^{\beta r -2r^2/R}}{64r^r R^{\gamma r}}.
$
\end{lemma}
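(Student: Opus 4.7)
The plan is to chain together the structural results of Section~\ref{sec:lb} up to this point and then extract $t^*$ as the threshold at which the resulting lower bound on $g(t)$ drops below $t$. Fix an arbitrary $r$-pruned $(A_1,\ldots,A_p)\in A^{good}_\delta$ with $T:=\sum_i|A_i|\le t\cdot 2^d$ (Lemma~\ref{pruninglemma} allows the pruning); we must show that $\ES[\max_i|A_i\cap S|]\cdot p/n\ge t$.

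First, Proposition~\ref{exact-edge-density} gives $\sum_i|\Edr(A_i)|\ge(\delta/4)\binom{d}{r}2^{d-1}$, so the $|A_i|$-weighted mean of $\edr(A_i)$ is at least $\bar e:=\delta\binom{d}{r}/(8t)$. Second, pick $b\in\{0,\ldots,\lfloor r/2\rfloor\}$ to be the smallest integer with $\bar e/2\ge 4\binom{R-r}{b+1}\binom{d}{r-b-1}$, and call an $A_i$ \emph{good} if it satisfies Lemma~\ref{cor:path-lb}'s hypothesis for this $b$. The bad $A_i$ together contribute at most $\bar e\,T/2$ to $\sum_i|A_i|\edr(A_i)$, so restricting to the good sets costs only a factor of $2$ in the weighted mean. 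Since $R/r\ge 2$, the map $x\mapsto x^{R/r}$ is convex; applying Jensen's inequality to the $|A_i|$-weighted distribution on good sets (and noting that the resulting $T_g^{1-R/r}$ factor is at least $T^{1-R/r}$ because $1-R/r\le 0$) and then invoking Lemma~\ref{cor:path-lb} pointwise gives
\[
\sum_i|A_i|\,\eR(A_i)\ \ge\ \frac{T\,(\bar e/2)^{R/r}}{4^{R/r}\,R!\,d^{bR/r}}.
\]

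Third, Lemma~\ref{lemma:max-to-ER} combined with $T=t\cdot 2^d$ and multiplication by $p/n$ yields $g(t)\ge C\cdot t^{1-R/r}$ for a constant $C$ assembled from $\delta$, $R^{(r)}$, $r!$, $R!$, $\binom{d}{r}$, $d^{-2r-bR/r}$, and $p/n=n^{\gamma-1}$. Because $R/r\ge 2$, the right-hand side is decreasing in $t$, so $g(t)\ge t$ holds iff $t\le C^{r/R}$. Plugging in $\binom{d}{r}\ge d^r/(2r)^r$, reducing $(R^{(r)}r!)^{r/R}/(R!)^{r/R}$ via Stirling, and expanding $(p/n)^{r/R}$ using $n\le B(d,R)\approx d^R/R!$ together with $p=n^\gamma$, the powers of $d$ and $R$ in $C^{r/R}$ collect into $d^{\gamma r-b-2r^2/R}\cdot R^{r^2/R-\gamma r}$ while the numerical constants fit into $\delta/(64\,r^r)$. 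Taking the smallest admissible $b$ then finalizes $t^*$.

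The main obstacle is the case split defining $\beta$, which arises from the choice of $b$. When $\gamma r\le 1$, the value $b=0$ already satisfies the density threshold at $t=t^*$, so $\gamma r-b=\gamma r$ and $\beta=\gamma$. When $\gamma r>1$, the $b=0$ threshold fails at $t=t^*$, and the smallest admissible $b$ is roughly $\gamma r/2$, giving $\gamma r-b=\gamma r/2$ and $\beta=\gamma/2$. The technical care lies in verifying that this $b$ always lies inside $\{0,\ldots,\lfloor r/2\rfloor\}$ over the full parameter range of the theorem (using $r\le\sqrt{d/2}$ and $n\ge d^{r\log_2 d}$), confirming that the residual $R^{r^2/R}$ slack is absorbed into the $R^{\gamma r}$ denominator, and corralling the remaining Stirling-type and constant factors into exactly the stated $64\,r^r$ prefactor.
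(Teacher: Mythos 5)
Your plan follows the paper's proof quite closely: prune, invoke Proposition~\ref{exact-edge-density} for the $\edr$-density, split into good/bad indices via the hypothesis of Lemma~\ref{cor:path-lb}, apply Jensen, chain through Lemma~\ref{lemma:max-to-ER}, and extract $t^*$ as the crossover point. The one genuine difference is how you choose $b$: you take the \emph{smallest} admissible $b$, determined by the data-dependent quantity $\bar e = \delta\binom{d}{r}/(8t)$, whereas the paper fixes $b=\lfloor\gamma r/2\rfloor$ up front. This difference is not cosmetic — it is where a subtle circularity enters your argument. Because your $b$ depends on $\bar e$, and $\bar e$ depends on $t$, the ``constant'' $C$ in your bound $g(t)\ge C\,t^{1-R/r}$ is actually a (piecewise-constant, nonmonotone in the wrong direction) function of $t$, so ``$g(t)\ge t$ iff $t\le C^{r/R}$'' is not a closed-form threshold and cannot simply be inverted to read off $t^*$. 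You would need to analyze each constant piece of $b(t)$ and show that every one of them yields $g(t)\ge t$ on the relevant range, and in particular that the smallest admissible $b$ exists in $\{0,\ldots,\lfloor r/2\rfloor\}$ and never exceeds $\lfloor\gamma r/2\rfloor$ for $t\le t^*$ — you flag this as ``technical care'' but do not do it, and it is exactly what your $t^*$-extraction hinges on.

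The paper's fixed choice $b=\lfloor\gamma r/2\rfloor$ is precisely what dissolves this: $b$ is then a function only of the problem parameters, so the Jensen bound is a genuine function of $t$, and the only thing one has to verify is the single inequality $\frac{\delta}{16t^*}\binom{d}{r}\ge 4\binom{R-r}{b+1}\binom{d}{r-b-1}$, which the paper checks directly by plugging in $t^*$ and using $\beta r\le\min\{b+1,\gamma r\}$. I'd also note a secondary gap: the final step, where the powers of $d$ and $R$ and the Stirling factors ``collect into'' the stated $t^*$, is asserted rather than verified. This is where the paper's identity $\beta R + bR/r\le\gamma R$ (split on $\gamma r\le 1$ vs. $\gamma r>1$) and the bound $n\le B(d,R)\le d^R/R!$ do the actual work, and the computation is delicate enough that it should be carried out rather than gestured at. So: same skeleton, but the adaptive-$b$ variation needs to be either abandoned in favor of the fixed choice or supplemented with an argument controlling how $b(t)$ varies on $[1,t^*]$.
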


By construction, $R$ is $\Theta(\log_d n)$
and the assumption that
$n \geq d^{r\log_2 d}$ implies that $d^{2r^2/R}$ is at most $c^{r}$ for
some constant $c$.
Plugging these values into the formula for $t^*$,
the two bounds of Theorem~\ref{mainlb}(b) follow immediately by combining
Lemma~\ref{yao} and 
Lemma~\ref{prop:g(t)-lower-bound}.

The lower bound proof for $\eqohbound$ is almost identical, but it uses a
variant definition of $g$ and Lemma~\ref{prop:g(t)-lower-bound} that replaces
the condition $(A_1,\ldots,A_p)\in A^{good}_\delta$ with the condition that the
conclusion of Proposition~\ref{exact-edge-density} holds, which is the 
only place that membership in $A^{good}_\delta$ is used in the proof
of Lemma~\ref{prop:g(t)-lower-bound}.
\end{proof}

It only remains to prove Lemma~\ref{claim:t}.

\begin{proof}[Proof of Lemma~\ref{claim:t}]   
Let $t\le t^*$ and $(A_1,\ldots,A_p)\in A^{good}_\delta$ be a tuple satisfying
$\sum_{i\in [p]}|A_i|=t\cdot 2^d$ and
$g(t)= \E_{S\sim \Dn}\left[ \max_i |A_i \cap S|\right]\cdot \frac{p}{n}$.
Since Lemma~\ref{lemma:max-to-ER} gives us a lower bound on $g(t)$ in terms of
$\sum_i |\ER(A_i)|$, our goal is to
lower bound the latter quantity, assuming $t~\leq~t^*$.  We rewrite this quantity in terms of the distribution $\mu$ 
over $[p]$ with $\mu(i) = \frac{|A_i|}{t2^d}$ as
%
\begin{equation}\label{fact2}
	\ \frac{1}{2^d} \sum_{i\in [p]}  |\ER(A_i)| 
	\ = \ \frac{1}{2^d} \sum_{i\in [p]}  \eR(A_i)|A_i| 
	\ = \ t \cdot \E_{i \sim\mu}[ \eR(A_i)].
\end{equation}
We apply Lemma~\ref{cor:path-lb} to lower bound the RHS in (\ref{fact2}).  Let $I\subseteq [p]$ be the indices $i
$ such that $A_i$ satisfies $\mathsf{e}_r(A_i) 
\geq 4\binom{R-r}{b+1} \binom{d}{r-b-1}$
for $b = \floor{\g r/2}$.  
Observe that with this choice of $b$, we have $\beta r\le b+1$ since
either $\beta r =\gamma r\le 1\le b+1$ or
$\beta r = \gamma r/2 \le \floor{\g r/2}+1=b+1$.
Using Jensen's 
inequality for $z \mapsto z^{R/r}$, we obtain
\begin{equation}\label{fact3}
	\E_{i \sim\mu}[ \eR(A_i)] 
	 \ \geq \ 
	\sum_{i \in I} \mu(i) \cdot\frac{\edr(A_i)^{R/r}}{\denom} 
	\ \geq\ 
	\left(\sum_{i \in I} \mu(i)\edr(A_i)\right)^{R/r}\cdot\frac{1}{\denom}. 
\end{equation}
%
Now, since $\mu(i)\edr(A_i)=\frac{|A_i| e_r(A_i)}{t 2^d}=\frac{|\Edr(A_i)|}{t 2^d}$, and $(A_1,\ldots,A_p)\in A^{good}_\delta$,
Proposition~\ref{exact-edge-density} implies that
\begin{equation*}
\sum_{i\in [p]} \mu(i)\edr(A_i)\ge \frac{\delta}{8t}\binom{d}{r}.
\end{equation*}
Therefore, to lower bound $\E_{i \sim\mu}[ \eR(A_i)]$,
it suffices to upper bound 
$\sum_{i \notin I} \mu(i)\mathsf{e}_r(A_i)$, which is at most
$4\binom{R-r}{b+1}\binom{d}{r-b-1}$ by definition.
Now, since $t\le t^*$, 
$\binom{d}{r}\ge (d/r)^r$ and $\beta r\le \min\{b+1,\gamma r\}$, we have
 \begin{equation*}
 \frac{\delta}{16t}\binom{d}{r}\ge \frac{\delta}{16t^*}\binom{d}{r}\ge
 4R^{\gamma r} d^{r-\beta r + 2r^2/R} > 4R^{b+1} d^{r-b-1}\ge 
 \sum_{i \notin I} \mu(i)\mathsf{e}_r(A_i).
 \end{equation*}
%
Therefore,
\begin{equation*}
\sum_{i \in I} \mu(i)\mathsf{e}_r(A_i)
\ge \frac{\delta}{16t}\binom{d}{r}\ge
4R^{\gamma r} d^{r-\beta r + 2r^2/R},
\end{equation*}
and from (\ref{fact3}) we have
\begin{equation}
\E_{i \sim\mu}[ \eR(A_i)]  \ \ge\ 
\left(4R^{\gamma r} d^{r-\beta r+2r^2/R}\right)^{R/r} \cdot\frac{1}{\denom} 
\ \ge \ \frac{R^{\gamma R} d^{R-(\beta R+bR/r)+2r}}{R!} 
\label{fact4}
\end{equation}
Now if $\gamma r\le 1$ then $b=0$ and $\beta=\gamma$ so $\beta R+bR/r=\gamma R$;
alternatively, if $\gamma r>1$, then $b\le \gamma r/2$ and $\beta=\gamma/2$
and we have $\beta R +bR/r\le \gamma R$.
Therefore, from (\ref{fact4}) we have
\begin{equation}
\E_{i \sim\mu}[ \eR(A_i)] 
\ \ge\  \frac{R^{\gamma R} d^{R-\gamma R+2r}}{R!}.\label{fact4.5}
\end{equation}
Using Lemma~\ref{lemma:max-to-ER} and plugging in  (\ref{fact2}) and
(\ref{fact4.5}) and the definition of $\gamma=\log_n p$ into the definition of 
$g(t)$, we have
\begin{equation*}
	g(t) \ = \
	\frac{\E_{S\sim \Dn}\left[ \max_i |A_i \cap S|\right]}{n^{1-\g}}
	\ \geq\ t\cdot \frac{R^{(r)}r!}{2d^{2r} \cdot n^{1-\g}}\cdot
            \frac{R^{\gamma R} d^{R-\gamma R+2r}}{R!}.
\end{equation*}
Since $R\ge 2r\ge 2$, we see that
$g(t)\ge t\cdot \frac{R^{\gamma R} d^{(1-\gamma)R}}{n^{1-\gamma}}$.
Since $n\le B(d,R)\le d^R/R!$, we derive $g(t)\ge t$ as required.
 \end{proof}
 



\section{Discussion and Future Work}
We provided improved parallel algorithms for similarity joins  under Hamming distance.  We also proved communication lower bounds for one-round, local algorithms.  Qualitatively, we showed  that an overhead of $d^{\Theta(r)}$ is necessary and sufficient.   Although we stated our technical results for Hamming distance, we gave a template for upper and lower bounds in general edge-transitive similarity graphs. A main algorithmic theme running through our results was to perform upfront data replication so that the processors need only one round of communication.   This methodology may lead to improved algorithms for other distributed computation tasks.  

\paragraph{Local Computation.}   We optimized for the maximum
number of vertices assigned to any one processor and hence the maximum
difficulty of the local computation required;
however, our focus on communication leaves unanswered details about the actual
computation of the close pairs.  After the communication, any local algorithm
may be used to compare candidates and output pairs satisfying the distance threshold.  For example, processors could simply compare all received pairs.  Clearly, by enlarging the groups of points sent to each processor compared to~\cite{afrati-fuzzy,  afrati-anchor, afrati-dist} we lose the efficiency of being able to only check pairs of points within small subcubes or balls.  However, during the local computation stage, the processor may use a local partitioning scheme, possibly involving subcubes or balls, to reduce the overall number of comparisons.

Locality Sensitive Hashing (LSH) has been used successfully in practice for
range queries and nearest neighbor search~\cite{falconn, bahmani, HIM, lsh-practical}.
Recent work~\cite{coverlsh1, coverlsh2} exhibits an LSH-scheme for Hamming distance without false negatives. 

Overall, we believe that further optimizations are needed to implement an efficient all-pairs similarly search. In the $\ell_2$ metric, Aiger, 
Kaplan, and Sharir~\cite{sharir} design an algorithm for finding all close pairs using an  edge-covering consisting of  randomly shifted and rotated grids. Moreover, they demonstrate improvements over direct LSH approaches.    

Our approach also extends naturally to support dynamic and streaming data efficiently: only the processors responsible for a vector need to be notified for its addition or deletion.  Regarding randomness used for communication, work in the theoretical computer science community suggests bounded independence often suffices and improves performance~\cite{random2,  random3, random1}.

\paragraph{Open Questions.}
We point out five concrete future directions left open by our work.
\begin{enumerate}
\item Can we close the gap between our upper and lower bounds on $\ohbound$? Currently, there is a gap between the exponents of $\g r/2$ versus $r/2$.
\item Our randomized edge-covering needed a random construction to find a family of Hamming balls that cover all edges with high probability.  Is it possible to remove the error and decrease the communication by exhibiting an explicit construction? 
\item Our lower bound holds for the restricted model where processors must communicate sets of vertices.  Can we generalize our result and prove a lower bound on the maximum number of {\em bits} some processor must receive during the one round of communication? 
\item The protocol for arbitrary similarity graphs uses copies of the optimal edge-isoperimetric shape.  Can we determine this shape and analyze the resulting protocol for other metrics? Prime candidates include the $\ell_1$ distance and the edit distance over small alphabets. 
\item Does our edge-covering methodology lead to improvements in practice? Empirical research on similarity joins often employs a ``filter-then-verify'' strategy.  First, the algorithm or data structure coarsely prunes the set of candidates, often conservatively. Then, a brute force search reveals actual close pairs.  It is worth exploring filters based on edge-optimal sets. 
\end{enumerate}

\section*{Acknowledgements}
We thank Sivaramakrishnan Natarajan Ramamoorthy for bringing Sidorenko's conjecture 
and related results 
to our attention.

\bibliographystyle{plain}
\bibliography{mybib}

\appendix
\newpage
\section{Ball Ratios}
\begin{proposition}
\label{ballbound}
For $r< R\le (d+1)/2$, we have
\[\frac{B(d,R)}{B(d,R-r)}\le \frac{(d+1)d^{r-1}}{R^{(r)}}.\]
\end{proposition}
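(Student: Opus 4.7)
My plan is to telescope
\[
\frac{B(d,R)}{B(d,R-r)} \;=\; \prod_{i=0}^{r-1} \frac{B(d,R-i)}{B(d,R-i-1)}
\]
and bound each one-step ratio. A soft bound is immediate: writing
$B(d,k) = B(d,k-1) + \binom{d}{k}$, using the identity
$\binom{d}{k}/\binom{d}{k-1} = (d-k+1)/k$, and applying the trivial
$B(d,k-1) \ge \binom{d}{k-1}$, one gets
$B(d,k)/B(d,k-1) \le 1 + (d-k+1)/k = (d+1)/k$ for all $1 \le k \le d$.
Multiplying $r$ copies of this bound only yields $(d+1)^r/R^{(r)}$, which
overshoots the target by a factor of $((d+1)/d)^{r-1}$, so I need a sharper
one-step bound for all but one of the factors.

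The crux of the argument will be to prove the sharper estimate
$B(d,k)/B(d,k-1) \le d/k$ whenever $3 \le k \le (d+1)/2$. Rewriting the
inequality as $k\binom{d}{k} \le (d-k) B(d,k-1)$ and substituting
$k\binom{d}{k} = (d-k+1)\binom{d}{k-1}$, it suffices to prove
$B(d,k-2) \ge \binom{d}{k-1}/(d-k)$. Using
$\binom{d}{k-1} = \binom{d}{k-2}(d-k+2)/(k-1)$ together with the crude
bound $B(d,k-2) \ge \binom{d}{k-2}$, this reduces to the elementary
inequality $(k-2)(d-k) \ge 2$. The hypothesis $k \le (d+1)/2$ forces
$d-k \ge k-1$, so for $k \ge 3$ the left-hand side is at least
$(k-2)(k-1) \ge 2$, and the sharper bound is established.

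To finish, I will apply the sharper $d/(R-i)$ bound to the factors
$i = 0, 1, \ldots, r-2$ and the soft $(d+1)/(R-r+1)$ bound to the remaining
factor $i = r-1$. The indices $k = R-i$ with $i \le r-2$ range over
$[R-r+2, R]$, and the two hypotheses $r < R$ and $R \le (d+1)/2$ guarantee
$3 \le k \le (d+1)/2$ throughout that range, so the sharper bound is
legitimate. Multiplying then gives
\[
\frac{B(d,R)}{B(d,R-r)} \;\le\; \frac{d+1}{R-r+1}\cdot\prod_{i=0}^{r-2}\frac{d}{R-i} \;=\; \frac{(d+1)\,d^{r-1}}{R^{(r)}},
\]
as desired. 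The main obstacle is the sharper one-step bound: the improvement
requires exposing the second-largest term $\binom{d}{k-2}$ of $B(d,k-1)$,
which is available precisely when $k \ge 3$, and this is exactly why the
hypothesis $r < R$ is used (to ensure even the smallest relevant index,
$k = R-r+2$, is at least $3$).
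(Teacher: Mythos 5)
Your proof is correct, and it takes a route that is related to but organized differently from the paper's. The paper introduces the auxiliary quantity $b_i = \bigl(\binom{d}{R-i}+\cdots+\binom{d}{R}\bigr)/\binom{d}{R-i}$, observes $B(d,R)/B(d,R-r) \le b_r$ by discarding the common prefix $\binom{d}{0}+\cdots+\binom{d}{R-r}$, and then proves $b_i \le (d+1)d^{i-1}/R^{(i)}$ by induction on $i$ using the recurrence $b_{i+1} = 1 + \tfrac{d-R+i+1}{R-i}b_i$; the crucial device for absorbing the additive ``$+1$'' is the companion inductive invariant $b_i \ge 2$ (which is where the hypothesis $R \le (d+1)/2$ enters). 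You instead telescope $B(d,R)/B(d,R-r)$ directly as a product of one-step ratios $B(d,k)/B(d,k-1)$ and prove the standalone lemma $B(d,k)/B(d,k-1) \le d/k$ for $3 \le k \le (d+1)/2$, applying it to all but one factor and using the weaker $(d+1)/k$ bound for the last. Both methods must overcome the same obstruction — that $r$ copies of the naive $(d+1)/k$ bound overshoot by $((d+1)/d)^{r-1}$ — but they do so by different means: the paper spends the slack on the ``$+1$'' in its recurrence via $b_i \ge 2$, whereas you squeeze each factor down to $d/k$ by exposing the $\binom{d}{k-2}$ term of $B(d,k-1)$. Your version is more modular and the sharpened one-step ratio is a reusable fact; the paper's is arguably more compact once the induction is set up. Both correctly pinpoint where $r < R$ and $R \le (d+1)/2$ are used.
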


\begin{proof}
For $i\ge 0$ define
\begin{equation*}
b_i \deq\frac{\binom{d}{R-i}+\cdots+\binom{d}{R}}{\binom{d}{R-i}}.
\end{equation*}
Both $B(d,R)$ and $B(d,R-r)$ contain the common terms
$\binom{d}0+\cdots +\binom{d}{R-r}\ge \binom{d}{R-r}$ and so
\begin{equation*}
\frac{B(d,R)}{B(d,R-r)}\ \le\  b_r=\frac{\binom{d}{R-r}+\cdots+\binom{d}{R}}{\binom{d}{R-r}}.
\end{equation*}
Observe that by definition $b_0=1$ and for $i\ge 0$ we have the recurrence
\begin{align*}
b_{i+1}=\ 1+\frac{\binom{d}{R-i}}{\binom{d}{R-i-1}}\cdot b_i 
=\ 1+\frac{d-R+i+1}{R-i} \cdot b_i.
\end{align*}
We prove by induction that for $i\ge 1$ we have
$2\le b_i\le (d+1)d^{i-1}/R^{(i)}$.
For the base case, since $b_0=1$ we have
$b_1=1+\frac{d-R+2}{R-1} b_0 = (d+1)/R$ and since $R\le (d+1)/2$, we have $b_1\ge 2$.
For $i\ge 1$,
\begin{align*}
b_{i+1}&=1+\frac{d-R+i+1}{R-i}\cdot b_i\\
&= 1+\left(\frac{d}{R-i}-\frac{R-i-1}{R-i}\right)\cdot b_i\\
&\le 1+\frac{d}{R-i}\cdot b_i -\frac{R-i-1}{R-i}\cdot 2\qquad
\mbox{since $b_i\ge 2$ by the inductive hypothesis}\\
&\le \frac{d}{R-i}\cdot b_i. \qquad\mbox{since $R> r \geq i$.}
\end{align*}
Now applying the inductive hypothesis we have
\begin{equation*}
b_{i+1}\ 
\le \ \frac{d}{R-i}\cdot b_i
\ =\ \frac{d}{R-i}\cdot (d+1)d^{i-1}/R^{(i)}
\ = \ (d+1)d^i/R^{(i+1)}
\end{equation*}
as claimed.
\end{proof}
\section{Other Lower Bounds}\label{otherbounds}  We sketch two simple, but nontrivial, lower bounds for Hamming distance that achieve a different dependence on $|S|$ than the bounds we obtain using isoperimetric inequalities.  For the discussion, let $L$ denote the number of vertices a processor receives, not bits.  

For the first hard distribution, pick one random ball of radius $r/2$ and include each vector in this ball in $S$ with probability half.  A processor may output at most all pairs among vertices it receives. On the other hand,  all pairs  in the input $S$ are within distance $r$.  Thus, we need \[p\binom{L}{2} \geq  \binom{B(d,r/2)}{2}.\]  Since $|S| = \Theta(B(d,r/2))$, we can rewrite this as $L \geq \Omega(|S|/\sqrt{p}).$

For another hard distribution, consider picking $|S|/B(d,r/2)$ random balls each of radius $r/2$, then subsampling each vector with probably half.   Receiving $L$ vertices leads to at most $\binom{L}{2}$ close pairs. The input has $\Theta(|S|\cdot B(d,r/2))$ close pairs.   This gives 
\[ L \geq \Omega\left( \sqrt{B(d,r/2)} \cdot \sqrt{|S|/ p} \right). \]

 \end{document}